\newtheorem{lemma}{Lemma}
\newtheorem{theorem}{Theorem}
\newtheorem{definition}{Definition}
\theoremstyle{definition}
\newcommand{\Minerva}{\textsc{Minerva}\xspace}
\newcommand{\Providence}{\textsc{Providence}\xspace}
\newcommand{\B}{{{B2}}\xspace}
\newcommand{\R}{{{R2}}\xspace}
\newcommand{\BRAVO}{\textsc{Bravo}\xspace}
\newcommand{\Bravo}{\textsc{Bravo}\xspace}
\begin{document}

\date{}

\title{\Large \bf \Providence: a Flexible Round-by-Round Risk-Limiting Audit}
\author{
{\rm Oliver Broadrick$^1$\thanks{odbroadrick@gmail.com}, Poorvi L. Vora$^1$, and Filip Zag{\'o}rski$^{23}$}\\
$^1$Department of Computer Science, The George Washington University\thanks{Authors supported in part by NSF Award 2015253}\\
$^2$University of Wroclaw\\
$^3$Votifica
} 
\maketitle

\begin{abstract}
A Risk-Limiting Audit (RLA) is a statistical election tabulation audit with a rigorous error guarantee. We present ballot polling RLA \Providence, an audit with the efficiency of \Minerva and flexibility of \BRAVO. We prove that \Providence is risk-limiting in the presence of an adversary who can choose subsequent round sizes given knowledge of previous samples. We describe a measure of audit workload as a function of the number of rounds, precincts touched, and ballots drawn. We quantify the problem of obtaining a misleading audit sample when rounds are too small, demonstrating the importance of the resulting constraint on audit planning. We present simulation results demonstrating the superiority of \Providence using these measures and describing an approach to planning audit round schedules. 

We describe the use of \Providence by the Rhode Island Board of Elections in a tabulation audit of the 2021 election. Our implementation of \Providence and audit planning tools in the open source R2B2 library should be useful to the states of Georgia and Pennsylvania, which are planning pre-certification ballot polling RLAs for the 2022 general election. 
\end{abstract}

\section{Introduction}
\label{sec:intro}
It is well-known that electronic voting systems are vulnerable to software errors and manipulation which may be undetected. Errors and/or manipulation may not always change an election outcome, but we would want to know when they do. {\em Software independent} voting systems \cite{SI-Wack,rivest2008notion} are ones where an undetected change in the software cannot lead to an undetected change in the election outcome. {\em Evidence-based elections} \cite{evidence-based} use software independent systems to produce trustworthy evidence of outcome correctness; incorrect outcomes are detected with high probability when the evidence is examined. One approach to evidence-based elections is to use voter-verified paper ballots, store them securely, and perform public audits---a compliance audit to determine whether the ballots were stored securely; and a rigorous tabulation audit, known as a risk-limiting audit (RLA) \cite{RLA}, to determine whether the outcome is correctly computed from the stored ballots. Many US states have had pilot RLA programs. Additionally, some states allow RLAs to be used towards audit requirements, and some states require RLAs before elections can be certified. 

We propose the \Providence audit, a new approach to the ballot polling RLA, and propose a new model for the work load of an election. We show that \Providence is superior to the popular ballot polling RLA \Bravo for real elections, and describe the use of our open source implementation by the Rhode Island Board of Elections for an audit of their 2021 elections. Our open-source implementations of \Providence and our audit planning tools are likely to be useful later this year; ballot polling audits are expected to be used as pre-certification RLAs in at least one statewide contest in both Georgia and Pennsylvania for the 2022 general elections in the US.   

\subsection{Background}
Ballot comparison RLAs require the fewest ballots of all known RLA approaches. On the other hand, they require the use of special election technology and are not always feasible. Ballot polling RLAs require a larger number of ballots, but are more feasible because they do not require any additional functionality of the voting system. What is needed is a complete ballot manifest (a list of ballot storage containers and the number of ballots in each) which enables the creation of a well defined list of the ballots and their locations (the fifth ballot in box number 20, for example). 

A ballot polling audit begins when a {\em round} \cite{usenix_minerva} of multiple randomly-chosen ballots is drawn. A risk measure is then computed to determine whether (a) the audit ends in success (the election outcome is declared correct) or (b) another round should be drawn. Election officials would typically decide to perform a full manual hand count if the audit does not stop in spite of drawing a large number of ballots, typically over multiple rounds. Ballot polling audits have been used in a number of US state pilots (California, Georgia, Indiana, Michigan, Ohio, Pennsylvania, Virginia and elsewhere) and in real statewide audits (Georgia, Virgina) \cite{vv_audits} as well as in audits of smaller jurisdictions, such as Montgomery County, Ohio \cite{usenix_minerva}. 

A {\em round-by-round (R2)} audit is one where the decision of whether to draw more ballots or not is taken after drawing a round of ballots; typically hundreds or thousands or tens of thousands of ballots in statewide elections. A {\em ballot-by-ballot (B2)} audit is the special case of round size one---when the decision is made after each ballot is drawn. The popular \BRAVO audit requires the smallest expected number of ballots when the announced tally of the election is correct, and stopping decisions are taken a ballot at a time (that is, when it is used as a B2 audit). Election officials typically draw ballots in large round sizes, and \BRAVO needs to be modified for use in this manner. For use as an R2 audit, the \BRAVO stopping condition can be applied once at the end of each round (End-of Round (EoR)), or retroactively after each ballot drawn if ballot order is retained (Selection-Ordered (SO)). SO \BRAVO is closer to the original B2 \BRAVO, and requires fewer ballots on average than EoR \BRAVO. But it requires the additional effort of tracking the order of ballots. 

Zag\'{o}rski {\em et al.} propose ballot polling RLA \Minerva \cite{usenix_minerva}, which does not need ballot order and relies only on sample and round tallies. They prove that it requires fewer ballots than EoR \BRAVO when both audits have the same pre-determined round schedule and the true tally is as announced. 
They also present first-round simulations demonstrating that \Minerva draws fewer ballots than SO \BRAVO in the first round for large first round sizes when the true tally is as announced. 
Broadrick {\em et al.} provide further simulations that show \Minerva requires fewer ballots over multiple rounds and for lower stopping probability \cite{simulations}, though the improvement from using \Minerva over either version of \BRAVO decreases with round size. 

\subsection{Open Questions in the Literature}
A major limitation of \Minerva is that one needs to determine the round schedule before the audit begins because \Minerva has not been shown to be risk-limiting if an adversary can choose subsequent round sizes after knowing the sample drawn. \BRAVO, on the other hand, is not limited in this manner. This allows \BRAVO audits the flexibility of choosing smaller subsequent round sizes if the sample drawn so far is a ``good'' sample. An open question is whether a ballot polling RLA exists with the efficiency of \Minerva and this flexibility of \BRAVO.

A major limitation of our understanding of the ballot polling problem as a community is that we use the number of ballots drawn or values proportional to this number \cite{mclaughlin_thesis,bernhard-diss,RI-report} as measures of the workload of an audit. If this were a correct measure of the workload of an audit, we would want to use B2 audits (round size is one) and make decisions about stopping the audit after drawing each ballot, because this leads to the smallest expected number of ballots. Election officials, on the other hand, greatly prefer drawing many ballots at once. This preference is likely due to the following. 
\begin{description}
\item Firstly, each round has an overhead workload as well, including setting up the round and communicating among the various localities involved in conducting the audit (for example, audits of statewide contests involve the drawing of ballots at county offices where the ballots are stored). 
\item Secondly, there is an overhead to finding a storage box and unsealing it. For large round sizes, multiple ballots may be drawn at once from a box, and the number of boxes retrieved is smaller than the number of ballots (storage boxes commonly contain many hundreds of ballots each). For smaller round sizes, the number of times a box is retrieved would be roughly identical to the number of ballots drawn, as it is unlikely that a single box will hold multiple ballots from the sample. 
\item Finally, in the current environment of misinformation, election officials would want to ensure that the probability of a misleading audit sample (falsely indicating that the loser won) is very small, which implies that round sizes should be large. 
\end{description}
Thus the workload of an audit is not simply a linear (or affine) function of the number of ballots drawn. Relatedly, an optimal round schedule is not completely determined by the expected number of ballots drawn. It depends on other variables as well. The consideration of all these variables is necessary while planning an audit. 

\subsection{Our Contributions}
We present \Providence, and provide the following:
\begin{enumerate}
\item Proof that \Providence is an RLA and resistant to an adversary who can choose subsequent round sizes with knowledge of previous samples.
\item Simulations of \Providence, \Minerva, SO \BRAVO, and EoR \BRAVO which show that \Providence uses number of ballots similar to those of \Minerva, both fewer than either version of \BRAVO.
\item Results and analysis from the use of \Providence in a pilot audit in Rhode Island.
\item A model of workload that includes the overhead effort of each round and the overhead effort of retrieving a storage unit of ballots; simulations that illustrate the use of this model to compare the different types of ballot polling audits and to plan an audit with minimal workload.
\item An analysis of round size as a function of the maximum acceptable probability of a misleading audit sample.
\item Open source implementation of \Providence and audit planning tools. 
\end{enumerate}

Our results demonstrate the superiority of \Providence over the other audits. Our work may be used by election officials to plan ballot polling audits, including in Georgia and Pennsylvania in 2022. 

\subsection{Organization} 
Section \ref{sec:related} describes related work. Section \ref{sec:prov} describes the \Providence audit, section \ref{sec:sims} the simulations comparing the number of ballots drawn using various ballot polling audits and section \ref{sec:pilot} the use of \Providence in an audit carried out by the Board of Elections of Rhode Island. Section \ref{sec:workload} presents our workload model and describes its use for a ballot polling audit using details of the 2020 US Presidential election in the state of Virginia. Our conclusions, the availability of an audit implementation and acknowledgements may be found in sections \ref{sec:conc}, \ref{sec:avail} and \ref{sec:ack} respectively.

\section{Related work}
\label{sec:related}
Bernhard provides a good description of the RLA and its assumptions, and also describes the process on the ground \cite{bernhard-sok}. 

The \BRAVO audit \cite{bravo} is the most popular ballot polling audit. When ballots are sampled one at a time, it is the audit with the smallest expected number of ballots drawn. 

The \Minerva audit \cite{usenix_minerva,arxiv_athena} was developed for use with large first round sizes, and has been proven to be risk limiting when the round schedule for the audit is fixed before any ballots are drawn. First-round sizes for a stopping probability of $0.9$ when the announced tally is correct have been shown to be smaller than those for EoR and SO \BRAVO for a wide range of margins; simulations \cite{arxiv_athena} support these observations. Additional simulations \cite{simulations} have shown that \Minerva requires fewer ballots than EoR and SO \BRAVO over multiple rounds and for smaller stopping probability. As expected, the advantage of \Minerva decreases for smaller stopping probability (smaller round sizes) as such round schedules approach the B2 round schedule (1,1,1,\ldots) for which \BRAVO is known to be most efficient.

Ballot polling audit simulations provide a means of educating the public and election officials \cite{dice} and to understand audit properties \cite{mclaughlin_thesis,simulations_house, blom_IRV, DBLP:conf/evoteid/HuangRSTV20}. There is work measuring the amount of time taken to examine a single ballot \cite{RI-report}. 
Simple workload estimates may be obtained by using the number of ballots drawn \cite{bernoulli-ballot-polling}, a more thorough workload estimation model includes the time taken to access individual ballots\cite{bernhard-diss}. 

We now summarize the model drawing largely from the notation and terminology of \cite{usenix_minerva,arxiv_athena,simulations,bravo}. The model is related work and not claimed to be original to this work. 

An audit $\mathcal{A}$ is a function that takes as input the sample of ballots and outputs either (1) \emph{Correct: stop the audit} or (2) \emph{Undetermined: sample more ballots}.
\BRAVO and \Minerva are modeled as binary hypothesis tests where the null hypothesis $H_0$ corresponds to a tied election and the alternative hypothesis $H_a$ to an election tally as announced. 
(When the number of ballots is odd, $H_0$ corresponds to the announced loser winning by one ballot.)
Thus the null hypothesis is the outcome distinct from the announced one which is most difficult to detect; the probability of failing to detect it, given that the null hypothesis is true, is the worst case such probability and should be below the risk limit \cite{Bayesian-RLA}.

\begin{definition}[Risk Limiting Audit ($\alpha$-RLA)]
An audit $\mathcal{A}$ is a Risk Limiting Audit with 
risk limit $\alpha$ iff for sample $X$
$$
Pr[\mathcal{A}(X) 
= \text{Correct} \,|\, H_0]\le \alpha
$$
\end{definition}

The stopping conditions of \BRAVO and \Minerva rely on the following ratios.

\begin{definition}[\BRAVO Ratio] \label{def:bravo-ratio} The \BRAVO audit uses the ratio $\sigma$. Consider a sample size of $n$ ballots with $k$ for the reported winner. The proportion of ballots for the reported winner under the alternative hypothesis and null hypothesis are $p_a$ and $p_0$ respectively.
\begin{equation}
    \sigma(k, p_a, p_0, n) \triangleq \frac{p_a^{k} (1-p_a)^{n-k}}{p_0^{k} (1-p_0)^{n-k}} 
    \label{eqn:bravoratio}
\end{equation}
\end{definition}

In \BRAVO, $p_0=\frac{1}{2}$. A \BRAVO audit outputs correct if and only if
$$\sigma(k,p_a,\frac{1}{2},n)\ge \frac{1}{\alpha}.$$

It is easy to see that the ratio $\sigma$ is the 
likelihood ratio:
$$
\frac{Pr[K=k|H_a,n]}{Pr[K=k|H_0,n]}= \frac{\binom{n}{k}p_a^{k} (1-p_a)^{n-k}}{\binom{n}{k}(\frac{1}{2})^n} =\sigma(k, p_a, \frac{1}{2}, n)
$$

Where \BRAVO uses the ratio of the values of the probability distribution functions, \Minerva uses the ratio of their \emph{tails}. Now it becomes useful to have shorthand for a sequence of cumulative round sizes and the corresponding sequence
of cumulative winner ballot tallies.
We use:
$$\bm{k_j}\triangleq(k_1,k_2,\ldots,k_j) \quad\text{and}\quad \bm{n_j}\triangleq(n_1,n_2,\ldots,n_j)$$

\begin{definition}[\Minerva Ratio] \label{def:minerva_ratio} The \R \Minerva audit uses the ratio $\tau_j$. We use cumulative round sizes $\bm{n_j}$, with corresponding $\bm{k_j}$ ballots for the reported winner in each round. The proportion of ballots for the reported winner under the alternative hypothesis and null hypothesis are $p_a$ and $p_0$ respectively.
         \begin{equation}
         \begin{aligned}
             \label{eqn:tau}
                 \tau_{j}(k_{j}, p_a,p_0, \bm{n_j}, \alpha )  \triangleq\\
                 \frac{Pr[K_{j} \geq k_{j} \wedge \forall_{i < j} ({\mathcal{A}}(X_i) ~\neq \text{Correct}) \mid H_a, \bm{n_j}]}{Pr[K_{j} \geq k_{j} \wedge \forall_{i < j} ({\mathcal{A}}(X_i) ~\neq \text{Correct}) \mid H_0, \bm{n_j}]}
         \end{aligned}
         \end{equation}
\end{definition}

\section{\Providence}
\label{sec:prov}
In this section we introduce the stopping condition of \Providence and prove some properties.

Recall that the proof that the \Minerva audit is risk-limiting assumes that the round schedule of \Minerva is predetermined and that, in particular, an adversarial auditor cannot determine the next round size after drawing a sample. This presents difficulties because a non-adversarial election official might want to draw a small next round if the current sample comes close to satisfying the risk limit. Because the \Minerva round size is predetermined, however, the election official would be required to draw a larger round size than necessary for the sample. Conversely, if the current sample is not at all close to satisfying the risk limit, it would be advantageous to draw a larger round than the predetermined round size. 

The \Providence audit is risk-limiting even if an adversarial auditor determines round sizes after drawing the sample, and next round size computations may use knowledge of the current sample. 

\subsection{Definition}
\label{sec:prov_def}
\begin{definition}[$(\alpha,p_a, p_0,k_{j-1},n_{j-1},n_j)$-\Providence]
    \label{def:minervatwo}
    For cumulative round size $n_j$ for round $j$ and a cumulative $k_j$ ballots for the reported winner found in round $j$, the \R \Providence stopping rule for the $j^{th}$ round is:
$$
\mathcal{A}(X_{j})=  \left\{ \begin{array}{ll} \text{Correct} ~~~~ \omega_{j}(k_{j}, k_{j-1}, p_a, p_0, n_j, n_{j-1}) \geq \frac{1}{\alpha}\\
        Undetermined ~~else \\
    \end{array}
    \right .
$$
where $\omega _{1}\triangleq \tau_{1}$ and for $j\ge 2$, we define $\omega _{j}$ as follows:
\begin{equation}
    \begin{aligned}
    \omega_{j}(k_{j}, k_{j-1}, p_a, p_0, n_{j}, n_{j-1})
    \triangleq\\
    \sigma(k_{j-1},p_a,p_0,n_{j-1})\cdot \tau_1(k_{j}-k_{j-1},p_a,p_0,n_j-n_{j-1})
    \end{aligned}
\end{equation}
\end{definition}

Notice that \Providence requires the computation of $\tau_j$ for $j=1$ and no other values of $j$. The value of $\tau_1$ is simply the ratio of the the tails of the binomial distributions for the two hypotheses and can be fairly efficiently computed. The computation of $\tau_j$ for $j \geq 2$, as required in \Minerva, relies on the convolution of two probability distribution functions and is hence computationally considerably more expensive. 

Notice also that \Providence and \Minerva are identical for $j=1$. 

\subsection{Risk-Limiting Property: Proof}
\label{sec:proof}
We now prove that \Providence is risk-limiting using lemmas from basic algebra in Appendix~\ref{sec:proofs}.

\begin{theorem}
\label{thm:minerva2_is_rla_new}
An $(\alpha,p_a, p_0,k_{j-1},n_{j-1},n_j)$-\Providence audit is an
$\alpha$-RLA.
\end{theorem}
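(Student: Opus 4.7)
The plan is to prove $\Pr[\mathcal{A}(X) = \text{Correct} \mid H_0] \leq \alpha$ by decomposing the overall stopping event into the disjoint events $S_j$ of first stopping at round $j$, establishing a per-round bound $\Pr[S_j \mid H_0] \leq \alpha \cdot \Pr[S_j \mid H_a]$, and summing: $\sum_j \Pr[S_j \mid H_0] \leq \alpha \sum_j \Pr[S_j \mid H_a] \leq \alpha$. This is the same high-level template as Minerva's proof, but the challenge is that $\omega_j$ is a product of a PDF ratio and a tail ratio rather than a pure likelihood ratio.

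The first step is to give $\omega_j$ a clean probabilistic meaning. Under i.i.d.\ sampling the round increments $K_i - K_{i-1}$ are independent, so $\sigma(k_{j-1}, p_a, p_0, n_{j-1})$ is exactly the likelihood ratio $\Pr[K_{j-1} = k_{j-1} \mid H_a, n_{j-1}] / \Pr[K_{j-1} = k_{j-1} \mid H_0, n_{j-1}]$ of the point event on the first $j-1$ rounds, while $\tau_1(k_j - k_{j-1}, p_a, p_0, n_j - n_{j-1})$ is, by its definition, the likelihood ratio of the tail event $\{K_j - K_{j-1} \geq k_j - k_{j-1}\}$ on the independent round-$j$ increment. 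Their product $\omega_j$ is therefore the likelihood ratio of the joint event $\{K_{j-1} = k_{j-1}\} \cap \{K_j \geq k_j\}$.

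The second step handles a single round. Since $\tau_1$, and hence $\omega_j$, is non-decreasing in $k_j$, for any adversarial round size $n_j$ chosen as a function of the history the stopping condition $\omega_j \geq 1/\alpha$ is equivalent to $K_j \geq k^*$ for a threshold $k^* = k^*(k_{j-1}, n_{j-1}, n_j)$. Rearranging $\sigma(k_{j-1}, n_{j-1}) \cdot \tau_1(k^* - k_{j-1}, n_j - n_{j-1}) \geq 1/\alpha$ and using the definition of $\tau_1$ as a ratio of tail probabilities yields
\begin{equation*}
\Pr[K_j \geq k^* \mid K_{j-1} = k_{j-1}, H_0] \leq \alpha \, \sigma(k_{j-1}, n_{j-1}) \cdot \Pr[K_j \geq k^* \mid K_{j-1} = k_{j-1}, H_a].
\end{equation*}

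The final step aggregates over histories. I condition on the entire history $h = (k_1, \ldots, k_{j-1})$, restricted to admissible trajectories on which the audit has not yet stopped (a property determined by $h$ and the adversarial round schedule). The inequality above bounds the $H_0$-probability of the round-$j$ tail. The likelihood ratio of $h$ itself under $H_a/H_0$ is $\sigma(k_{j-1}, n_{j-1})$, because increments are independent under both hypotheses and the per-round Bravo ratios telescope to the cumulative one. The two factors of $\sigma$ cancel when we multiply and sum over $h$, producing $\Pr[S_j \mid H_0] \leq \alpha \Pr[S_j \mid H_a]$, which sums to the theorem. The main obstacle, reconciling the mixed point/tail structure of $\omega_j$ with the adaptive choice of $n_j$, is resolved by the likelihood-ratio interpretation in step one together with the observation that $n_j$ is fully determined once we condition on the history, so the tail-ratio inequality of step two applies verbatim to the remaining independent increment.
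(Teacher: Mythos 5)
Your proposal is correct and follows essentially the same route as the paper's proof: both interpret $\omega_j$ as the likelihood ratio of the joint event $\{K_{j-1}=k_{j-1}\}\cap\{K_j\ge k_j\}$ (the paper via its Lemma~\ref{lemma:any_ratio_is_sigma_simple}, you via independence of increments and telescoping of the \BRAVO ratios), use monotonicity of $\tau_1$ in $k_j$ to reduce the stopping condition to a threshold $k_{min,j}$, cancel the $\sigma(k_{j-1},\cdot)$ factor against the likelihood ratio of the history when summing over non-stopped trajectories, and conclude by summing over rounds and bounding the total $H_a$-stopping probability by $1$. Your explicit conditioning on the full admissible history and the remark that the adaptively chosen $n_j$ is determined by that history is a slightly more careful rendering of the same aggregation step the paper performs when it sums over $K_{j-1}<k_{min,j-1}$.
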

\begin{proof}
Let $\mathcal{A}=(\alpha,p_a, p_0,k_{j-1},n_{j-1},n_j)$-\Providence.
Let $\bm{n_j}$ be the cumulative round sizes used in this
audit, with corresponding cumulative tallies of
ballots for the reported winner $\bm{k_j}$.
For round $j=1$, by Definitions \ref{def:minervatwo}
and \ref{def:minerva_ratio}, we see that
the $\mathcal{A}=\text{Correct}$ (the audit stops) only when
$$
\tau_1(k_{1},p_a,p_0,n_1)\\
=\frac{Pr[K_{1} \geq k_{1} \mid H_a, n_1]}{Pr[K_{1} \geq k_{1} \mid H_0, n_1]}
\ge \frac{1}{\alpha}.
$$
By Lemma \ref{lemma:minerva2_kmin_exists}, we see that this
is equivalent to the following:
$$
\frac{Pr[K_{1} \geq k_{min,1} \mid H_a, n_1]}{Pr[K_{1} \geq k_{min, 1} \mid H_0, n_1]}
\ge \frac{1}{\alpha}.
$$
where $k_{min,1} = k^{p_a, p_0, \alpha, 0}_{min, 1, 0, n_1}$ (see Lemma \ref{lemma:minerva2_kmin_exists} and Definition \ref{def:kmin}). 

For any round $j\ge 2$, by Definition \ref{def:minervatwo}
and Lemma \ref{lemma:minerva2_kmin_exists},
$\mathcal{A}=\text{Correct}$ (the audit stops) if and only if
\begin{equation*}
\begin{aligned}
\omega_{j}(k_{j}, k_{j-1}, p_a, p_0, n_{j}, n_{j-1}, \alpha )\triangleq\\
\sigma(k_{j-1},p_a,p_0,n_{j-1})\cdot \tau_1(k_{j}-k_{j-1},p_a,p_0,n_j-n_{j-1})
\ge \frac{1}{\alpha}.
\end{aligned}
\end{equation*}
By Lemma \ref{lemma:any_ratio_is_sigma_simple}
and Definition \ref{def:minerva_ratio}, this is equivalent to
$$
\frac{\Pr[K_{j-1} = {k_{j-1}} \mid H_a, n_{j-1}] Pr[K_{j} \ge k_{j} \mid {k_{j-1}}, H_a, n_{j-1}, n_{j}]}{\Pr[K_{j-1} = {k_{j-1}} \mid H_0, n_{j-1}] Pr[K_{j} \ge k_{j} \mid {k_{j-1}}, H_0, n_{j-1}, n_{j}]}
$$$$\ge \frac{1}{\alpha}.
$$
By Lemma~\ref{lemma:minerva2_kmin_exists} and Definition~\ref{def:minervatwo},
we see that there exists a $k_{min, j} = k^{p_a, p_0, \alpha, k_{j-1}}_{min, j, n_{j-1}, n_j}  \leq k_j$ 
for which
$$
\frac{\Pr[K_{j-1} = {k_{j-1}} \mid H_a, n_{j-1}] Pr[K_{j} \ge k_{j} \mid {k_{j-1}}, H_a, n_{j-1}, n_{j}]}{\Pr[K_{j-1} = {k_{j-1}} \mid H_0, n_{j-1}] Pr[K_{j} \ge k_{j} \mid {k_{j-1}}, H_0, n_{j-1}, n_{j}]}\ge
$$
$$
\frac{\Pr[K_{j-1} = {k_{j-1}} \mid H_a, n_{j-1}] Pr[K_{j} \ge k_{min, j} \mid {k_{j-1}}, H_a, n_{j-1}, n_{j}]}{\Pr[K_{j-1} = {k_{j-1}} \mid H_0, n_{j-1}] Pr[K_{j} \ge k_{min, j} \mid {k_{j-1}}, H_0, n_{j-1}, n_{j}]} 
$$$$
\ge 
\frac{1}{\alpha}
$$
The above may be rewritten as
\begin{equation*}
\begin{aligned}
\sum_{{k} = k_{min, j}}^{n_j} Pr[(K_{j} , K_{j-1}) = (k, k_{j-1}) \mid H_0, n_{j-1}, n_{j}] \leq \\
\alpha \sum_{{k} = k_{min, j}}^{n_j} Pr[(K_{j} , K_{j-1}) = (k, k_{j-1}) \mid H_a, n_{j-1}, n_{j}]
\end{aligned}
\end{equation*}
The left hand side above is the probability of stopping in the $j^{th}$ round and $K_{j-1} = k_{j-1}$, given the null hypothesis, which is smaller than $alpha$ times the same probability given the alternate hypothesis. Summing both sides over all values of $K_{j-1} < k_{min, j-1}$ gives us a similar relationship between the probabilities of stopping in round $j$ (given the null and alternate hypotheses respectively). Note here that the relationship holds even if the values of $n_{j}$ depend on $k_{j-1}$. When both sides of the inequality are further summed over all rounds, we get:  

$$
Pr[\mathcal{A}=\text{Correct} \mid H_0]
\le
\alpha Pr[\mathcal{A}=\text{Correct} \mid H_a]
$$
Finally, because the total probability of stopping the audit under
the alternative hypothesis is not greater than 1, we get
$$
Pr[\mathcal{A}=\text{Correct} \mid H_0] \le
\alpha.
$$
\end{proof}

\subsection{Resistance to an adversary choosing round sizes}
\label{sec:adversary}


\Minerva was proven to be a risk-limiting audit for a predetermined round schedule.
As explained earlier, it is not clear that \Minerva is risk-limiting if an adversary can 
adaptively select the round schedule as the audit proceeds. In this section we prove that \Providence does not have this problem, and is risk-limiting even when the adversary can choose next round sizes based on knowledge of the current sample. 

\begin{definition}[Strategy-Proof RLA]
 An audit $\mathcal{A}$ is a Strategy-Proof Risk Limiting Audit with risk limit $\alpha$ iff for
 all strategies of selecting round schedule and for sample $X$ 
 \[
  \Pr\left[\mathcal{A}(X) = \text{Correct} | H_0\right] \leq \alpha.
 \]

\end{definition}

\begin{lemma}
\Providence is a Strategy-Proof RLA. 
\end{lemma}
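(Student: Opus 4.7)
The plan is to reuse the proof of Theorem~\ref{thm:minerva2_is_rla_new} almost verbatim, observing that the argument there was already set up to handle round sizes that depend on the previous sample. First I would formalize an arbitrary adversarial strategy as a sequence of functions $f_j$ mapping the observed history $(n_1,k_1,\ldots,n_{j-1},k_{j-1})$ to the next cumulative round size $n_j = f_j(n_1,k_1,\ldots,n_{j-1},k_{j-1})$. Any adaptive rule the auditor could use, including ones that exploit current sample information, is captured by such an $f_j$. Because $\omega_j$ factors as $\sigma(k_{j-1},p_a,p_0,n_{j-1})\cdot \tau_1(k_j-k_{j-1},p_a,p_0,n_j-n_{j-1})$, the stopping statistic for round $j$ depends on the history only through $(k_{j-1},n_{j-1})$ and the round-$j$ increment, not through the detailed past trajectory.

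The key step is to carry out the per-round inequality of Theorem~\ref{thm:minerva2_is_rla_new} conditionally on the full history, for each fixed value of $n_j$ selected by $f_j$. Conditional on $K_{j-1}=k_{j-1}$ and on the adversary having chosen a particular $n_j$, the distribution of $K_j-K_{j-1}$ under either hypothesis is the standard binomial on $n_j-n_{j-1}$ draws, independent of how $f_j$ arrived at $n_j$; so the same manipulation that produced
\begin{equation*}
\begin{aligned}
\Pr[\text{stop in round }j \mid H_0,\text{history}] \\ \le \alpha\,\Pr[\text{stop in round }j \mid H_a,\text{history}]
\end{aligned}
\end{equation*}
in Theorem~\ref{thm:minerva2_is_rla_new} goes through unchanged. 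The observation made there that ``the relationship holds even if the values of $n_j$ depend on $k_{j-1}$'' is precisely the strategy-proofness statement at the per-round level.

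Finally I would integrate out the histories: summing (now over the measure induced on histories by the adversary's strategy together with the sampling distribution) over all values of $k_{j-1}$ that do not already trigger stopping, and then over all rounds $j$, yields
\[
\Pr[\mathcal{A}=\text{Correct}\mid H_0]\le \alpha\,\Pr[\mathcal{A}=\text{Correct}\mid H_a]\le \alpha,
\]
where the last inequality uses that a probability is at most $1$. Since the strategy $\{f_j\}$ was arbitrary, the risk bound holds uniformly over all strategies, establishing the strategy-proof property.

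The main obstacle is purely bookkeeping: one has to verify that conditioning on the adversary's choice of $n_j$ does not alter the conditional distribution of the fresh draws (because $f_j$ is measurable with respect to the past), and that the telescoping/summation collapsing the per-round bounds into a global bound still works when $n_j$ is a random variable rather than a preset constant. Once this measurability point is carefully stated, no new inequalities are needed beyond those already established in Theorem~\ref{thm:minerva2_is_rla_new}.
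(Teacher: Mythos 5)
Your proposal is correct and follows essentially the same route as the paper: the paper's own proof simply cites Theorem~\ref{thm:minerva2_is_rla_new} together with Lemma~\ref{lemma:markov}, pointing out that the per-round inequality there already holds when $n_j$ depends on $k_{j-1}$. You have merely expanded that remark into an explicit argument (formalizing strategies as history-measurable functions $f_j$ and checking that the conditional binomial distribution of the fresh draws is unaffected), which adds bookkeeping detail but no new ideas beyond what the paper invokes.
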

\begin{proof}
This property follows from Theorem~\ref{thm:minerva2_is_rla_new} and Lemma~\ref{lemma:markov}. Note that, as described in section \ref{sec:proof}, the proof of the risk-limiting nature of the audit does not rely on round sizes $n_j$ being identical for all values of $k_{j-1}$. 
\end{proof}

To illustrate the practical implication of this property, we consider a toy example: an RLA of a two-candidate contest with margin $0.01$ and risk limit $0.1$. 
Suppose we wish to achieve a conditional stopping probability $0.9$ in each round of the audit. For \Providence, we can compute a new round size for each round based on the previous samples. For \Minerva, however, we would have a predetermined round schedule. We use the default \Minerva round schedule of audit software Arlo \cite{arlo} (used by many states performing an RLA), which is is $[x, 2.5x, 6.25x, ...]$; that is, the next marginal round size is $1.5$ times the current one. This multiplier of $1.5$ is known to give, over a wide range of margins, a probability of stopping roughly $0.9$ in the second round if the first round size has probability of stopping $0.9$. 

Both the audits of our toy example therefore begin with a first round size of $17,272$ with a $0.9$ probability of stopping, and both will stop in the first round if the sample contains at least $8,725$ ballots for the winner. We now consider two cases for which the audit proceeds to a second round. 
\begin{description}
\item In one case there are $8,724$ votes for the winner in the sample, just one fewer than the minimum needed to meet the risk limit. In the \Minerva audit, we are already committed to a second round size of $43,180$ which, given the nearly-passing sample of the first round is higher than necessary, achieving a stopping probability in the second round of $.954$. The \Providence audit samples more than $9,000$ fewer ballots with a round size of $34,078$, achieving the desired $0.9$ probability of stopping.
\item In a less lucky sample, the winner recieves $8,637$ ballots, few more than the loser recieves. In the \Minerva audit, we again have to use a second round size of $43,180$, but now this round size only achieves a $0.727$ probability of stopping, significantly less than the desired $0.9$. Again, the \Providence audit can scale up the second round size according to the first sample and achieve the desired $0.9$ probability of stopping with $58,007$ ballots.
\end{description}

\subsection{Efficiency}
\begin{lemma}
\label{lem:efficiency}
For any risk-limit $\alpha \in (0, 1)$, for any margin
and for any round schedule $[n_1, \ldots, n_j]$, 
the \Providence RLA is more efficient than EoR \BRAVO.
\end{lemma}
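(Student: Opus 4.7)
The plan is to establish the pointwise inequality $\omega_j(k_j,k_{j-1},p_a,\tfrac12,n_j,n_{j-1}) \ge \sigma(k_j,p_a,\tfrac12,n_j)$ on every sample path, so that whenever the EoR \BRAVO stopping condition fires in round $j$, so does the \Providence condition. Because both audits share the round schedule $[n_1,\dots,n_j]$ and draw the same sample, this pointwise bound immediately yields stochastic dominance of \Providence's stopping time over that of EoR \BRAVO.

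First I would exploit the product form of the Bernoulli likelihood to note that $\sigma$ is multiplicative across disjoint chunks of the sample: for any round $j\ge 2$,
\begin{equation*}
\sigma(k_j,p_a,\tfrac12,n_j)
= \sigma(k_{j-1},p_a,\tfrac12,n_{j-1}) \cdot \sigma(k_j-k_{j-1},p_a,\tfrac12,n_j-n_{j-1}),
\end{equation*}
which follows directly from Definition~\ref{def:bravo-ratio}. Comparing this factorization to the definition of $\omega_j$, what remains is to show that, for fixed $(p_a,n,k)$ with $p_a>\tfrac12$,
\begin{equation*}
\tau_1(k,p_a,\tfrac12,n) \;\ge\; \sigma(k,p_a,\tfrac12,n).
\end{equation*}

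The tail-vs-point inequality is the only non-routine step and I would prove it by rewriting $\tau_1$ as a weighted average. Writing $f_a$ and $f_0$ for the binomial pmfs under $H_a$ and $H_0$,
\begin{equation*}
\tau_1(k,p_a,\tfrac12,n)
= \frac{\sum_{j\ge k} f_a(j)}{\sum_{j\ge k} f_0(j)}
= \frac{\sum_{j\ge k} \sigma(j,p_a,\tfrac12,n)\,f_0(j)}{\sum_{j\ge k} f_0(j)},
\end{equation*}
which is a convex combination of the values $\{\sigma(j,p_a,\tfrac12,n)\}_{j\ge k}$ with weights $f_0(j)/\sum_{i\ge k}f_0(i)$. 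Since the Bernoulli family has monotone likelihood ratio, $\sigma(j,p_a,\tfrac12,n)$ is non-decreasing in $j$, so this average is at least its minimum value, attained at $j=k$. This gives the desired bound.

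Combining the multiplicativity of $\sigma$ with the tail-vs-point bound applied to the incremental chunk $(k_j-k_{j-1},n_j-n_{j-1})$ yields
\begin{equation*}
\omega_j \;\ge\; \sigma(k_{j-1},p_a,\tfrac12,n_{j-1}) \cdot \sigma(k_j-k_{j-1},p_a,\tfrac12,n_j-n_{j-1}) \;=\; \sigma(k_j,p_a,\tfrac12,n_j),
\end{equation*}
and the $j=1$ case follows because $\omega_1=\tau_1\ge\sigma$ directly. Hence every sample that causes EoR \BRAVO to stop at round $j$ also triggers \Providence at round $j$, which is what we mean by efficiency. The main (and only) obstacle is the tail-vs-point inequality, and the weighted-average argument above dispatches it cleanly using only the monotone likelihood ratio property of the binomial family.
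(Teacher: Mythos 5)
Your proposal is correct and follows essentially the same route as the paper: factor $\sigma(k_j,p_a,p_0,n_j)$ into the product over the first $j-1$ rounds and the last increment, then upgrade the incremental $\sigma$ factor to $\tau_1$ via the tail-versus-point inequality, so that every sample stopping EoR \BRAVO also stops \Providence. The only difference is that you supply a self-contained weighted-average/MLR proof of $\tau_1 \ge \sigma$, whereas the paper cites this step from prior work; your argument for it is valid.
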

\begin{proof} Appendix~\ref{sec:proofs}\end{proof}

\section{\Providence Audit Simulations}
\label{sec:sims}
We use simulations to provide additional evidence for our theoretical claims regarding \Providence and to gain insight into audit behavior. As done in \cite{simulations}, we use margins from the 2020 US Presidential election---state-wide pairwise margins of $0.05$ or larger between the two leading candidates. Narrower margins are computationally expensive, especially for the simulations of tied elections, which, by design, have a low probability of stopping and hence quickly increase in sample size. We use the simulator in the R2B2 software library\cite{r2b2}. For each margin, we perform $10^4$ \Providence audit trials each on a tied election (hypothesis $H_0$, the null hypothesis) and the election as reported (hypothesis $H_a$, the alternate hypothesis). All trials have risk limit $\alpha = 0.1$, a maximum of $5$ rounds, and a conditional stopping probability of $0.90$ in each round. That is, each next round size is selected to be large enough to give a $0.90$ conditional probability of stopping in that round, assuming the announced tally is correct and given the tally of previous rounds. We use a maximum of five rounds because virtually no audits would progress beyond five rounds given the large conditional probability of stopping. 

In the simulations of \Providence audits of a tied election, the fraction of audits that stop, as shown in Figure~\ref{fig:prov-risk}, is an estimate of maximum risk. For all margins, this estimated maximum risk is less than the risk limit, supporting the claim that \Providence is risk-limiting.

\begin{figure}[h!]
\includegraphics[width=.5\textwidth]{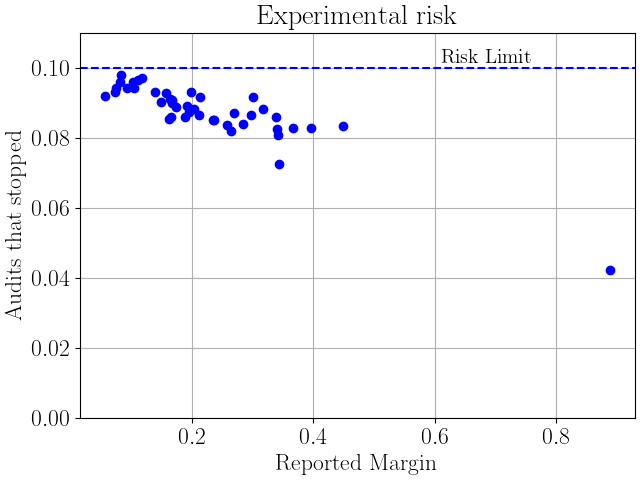}
\caption{The fraction of simulated \Providence audits on tied elections that stopped in any rounds (we performed five rounds at a risk limit of $0.1$) as a function of contest margin. This value is an estimate of the maximum risk of the \Providence audit.}
\label{fig:prov-risk}
\end{figure}

Simulations of audits of the election as reported provide insight into stopping probability and number of ballots drawn when the election is as reported. Figure~\ref{fig:prov-sprob} shows that the stopping probabilities over the first rounds are near and slightly above $0.9$ as expected, since our software chose round sizes to give at least a $0.9$ conditional stopping probability. The values are not as tight around $0.9$ for later rounds because fewer audit trials make it to later rounds, and our experimental probability estimates are not as accurate. 

\begin{figure}[h!]
\includegraphics[width=.5\textwidth]{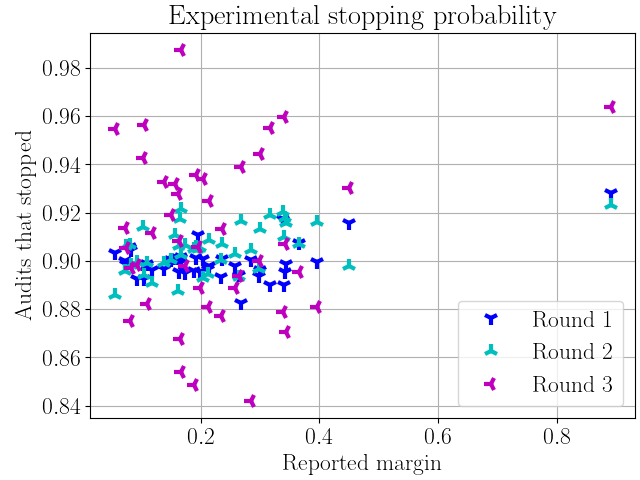}
\caption{The fraction of simulated \Providence audits of the election as reported that stopped for each round as a function of margin. This value is an estimate of the stopping probability conditioned on the sample of the previous round. The average fraction for rounds 1, 2, and 3 is $0.8996$, $0.9052$, and $0.9098$ respectively. We show only the first three rounds since so few audits make it to rounds 4 and 5 (of the order of $10^4 \times (0.1)^3$ and $10^4 \times (0.1)^4$ respectively).}
\label{fig:prov-sprob}
\end{figure}

We now investigate the efficiency of \Providence compared to \Minerva, SO \BRAVO, and EoR \BRAVO by taking a single margin as an example: the 2020 US Presidential election in the state of Texas, with margin $0.057$. We run an additional $10^4$ simulations for each of the three other audits on the same underlying election and on a tied election. Both \BRAVO implementations use a conditional stopping probability of $0.9$ for each round, while \Minerva uses a first round size with stopping probability $0.9$ and a multiplier of $1.5$ to obtain subsequent round sizes. 

\begin{figure}[h!]
\includegraphics[width=.5\textwidth]{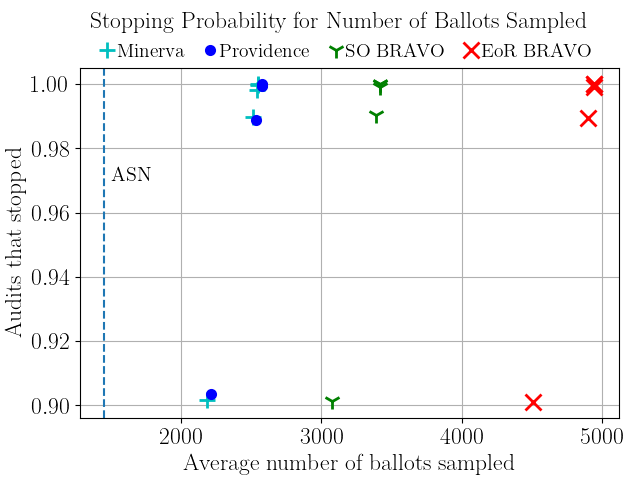}
\caption{For the entire audit, consisting of all five rounds, the fraction of simulated audits that stopped as a function of the average number of ballots drawn for \Providence, \Minerva, EoR \BRAVO, and SO \BRAVO. The average sample number (ASN) for \B \BRAVO is included for context.}
\label{fig:prov-asn}
\end{figure}
Figure~\ref{fig:prov-asn} shows the probability of stopping as a function of the number of ballots sampled, a plot similar to those presented in \cite{simulations}. Points above (higher probability of stopping) and to the left (fewer ballots) represent more efficient audits. As shown, \Providence has comparable efficiency to \Minerva, while both are significantly more efficient than either implementation of \BRAVO. In a contest with a narrow margin (in the 2020 US Presidential election, eight states had margins smaller than $0.03$) the difference in number of ballots sampled could correspond to many days of work which would need to be completed before a certification deadline.

\section{Pilot use}
\label{sec:pilot}
The Rhode Island Board of Elections performed a pilot audit in Providence 
in February 2022. The contest audited was a single yes-or-no question in the November 2021 election: Portsmouth's
Issue 1, "School Construction and Renovation Projects". The question had a reported margin of $0.2567$ and the audit used a risk-limit of $0.10$.

A first round size of $140$ ballots with large probability of stopping ($0.95$) was selected.
Selection order was tracked for the sake of analysis.
As expected, the audit concluded in the first round. The \Providence risk was $0.0418$. Table~\ref{tab:pilot-risks} shows risk measures for the drawn sample using \Providence, \Minerva and \BRAVO (both EoR and SO).

\begin{table}[h!]
\begin{center}
\begin{tabular}{ |c|c|c|c|c| } 
\hline
ballots& \rotatebox{45}{\Providence} & \rotatebox{45}{\Minerva} & \rotatebox{45}{SO \BRAVO} & \rotatebox{45}{EoR \BRAVO} \\
\hline
140 & \bf{0.0418} & \bf{0.0418} & \bf{0.0541} & 0.366 \\
\hline
\end{tabular}
\end{center}
\caption{Risk measures for the drawn first round of $140$ ballots in the Providence, RI pilot audit. Risks in bold meet the risk-limit ($10\%$) and thus correspond to audits that would stop.}
\label{tab:pilot-risks}
\end{table}

Note that the risk measures shown in Table~\ref{tab:pilot-risks} imply that, for the sample obtained in the pilot audit, an EoR \BRAVO audit would not have stopped in the first round, despite the large round size. Further, if the risk limit had been $0.05$ instead of $0.10$, SO \BRAVO also would have required moving on to a second round. 

We can use simulations to better understand typical audit behavior for the margin of this pilot audit and contextualize the results we obtained in the pilot. We run $10^4$ trial audits for several stopping probabilities $p$. Each round size is chosen to give a probability of stopping $p$ assuming the announced tally and given the results of previous rounds. We use the same $0.1$ risk limit and margin of $0.2557$. 

Figure~\ref{fig:pilot_sims} shows the average number of ballots sampled for each value of $p$ in the simulations. The vertical line denotes the stopping probability of the first round size actually chosen in the pilot ($140$ ballots). The large value of $p$ corresponds to a large first round size and a corresponding large value of average number of ballots. In later sections we show why average number of ballots is not the only metric to optimize, and how large round sizes can be beneficial from the perspective of other important metrics. 

\begin{figure}[h!]
\includegraphics[width=.5\textwidth]{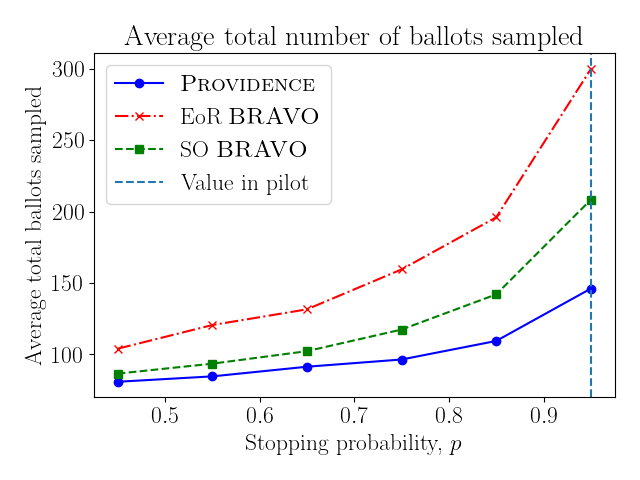}
\caption{The total number of ballots sampled on average as a function of $p$, the conditional stopping probability used to select each round size. We use the same contest parameters and risk limit as the Rhode Island pilot.}
\label{fig:pilot_sims}
\end{figure}

For this pilot audit, extensive planning of the round schedule was not necessary because the margin was large enough that relatively few ballots were needed to achieve the high probability of stopping. In Section~\ref{sec:workload} we consider a larger state-wide contest in Virginia, where selecting the round schedule has more significant implications. Virginia also currently uses ballot polling RLAs, whereas Rhode Island primarily uses batch comparison RLAs. Some of the ideas introduced in Section~\ref{sec:workload} provide a context for this pilot case as well.

For the sake of analysis, the selection order of the ballots sampled during the pilot was also recorded. Figure~\ref{fig:pilot_sequence} shows the cumulative tally of winner ballots after each new ballot in the selection order is added to the sample. We observe two interesting phenomena in this particular sample's selection order. 
\begin{description}
\item First, an SO \BRAVO audit of this sample stops because the \BRAVO condition is met when the sample (orange line) surpasses the minimum number of winner ballots (blue line) earlier in the sample.\footnote{Such cases also provide insight into how \Providence is a tighter test in expectation because SO \BRAVO ignores information from the rest of the sample after the \BRAVO condition is met at some point earlier in the selection order.} EoR \BRAVO, however, does not stop. It might be difficult to explain to the public why SO \BRAVO stops in more extreme cases like this, where the condition is met early in the sample, but poor evidence for the alternative hypothesis in the rest of the sample is ignored. 
\item Second, only $5$ of the first $11$ ballots were for the announced winner. A first round of size $11$ would have resulted in a smaller average total number of ballots drawn, but would have provided a misleading sample (suggesting that the winner was incorrectly reported) due to a too-small sample size. 
\end{description}
Both these ideas are addressed more thoroughly in Section~\ref{sec:workload}.

\begin{figure}
\includegraphics[width=.5\textwidth]{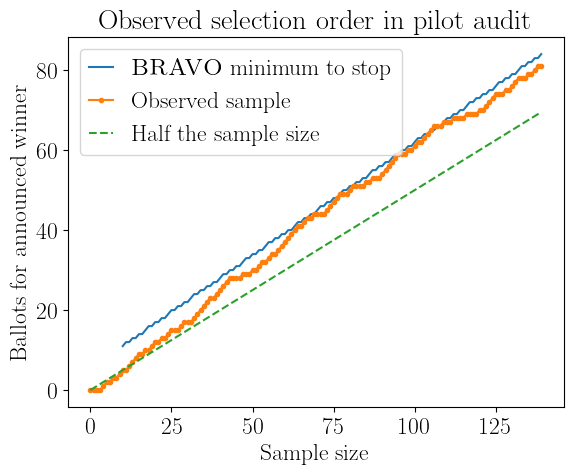}
\caption{For each sample size from $1$ to $140$, the intermediate cumulative sum of ballots for the announced winner found in the sample is shown.}
\label{fig:pilot_sequence}
\end{figure}

\section{Audit workload}
\label{sec:workload}
Some election audits have benefited from a one-and-done approach: draw a large sample with high probability of stopping in the first round and usually avoid a second round altogether. This is appealing for two reasons. Firstly, rounds have some overhead in both time and effort. Thus the time and person-hours of an audit grows not just with the number of ballots sampled but also with the number of rounds. Secondly, smaller first round sizes are not large enough to accurately capture the distribution of votes. There is a higher probability that the true winner has fewer votes in the audit sample than some other candidate. On the other hand, a one-and-done audit may draw more ballots than are necessary; a more efficient round schedule could require less effort and time pre-certification. To evaluate the quality of various round schedules, we construct a simple workload model. Using this model we show how optimal round schedules can be chosen. We provide software that can be used by election officials to choose round schedules based on estimates of the model parameters like maximum allowed probability of a misleading audit sample.

As an example, we consider the US Presidential contest in the 2016 Virginia statewide general election. This contest had a margin of $0.053$ between the two candidates with the most votes.
Analytical approximation of the expected audit behavior (quantities like expected total number of ballots sampled or total number of rounds) is challenging because the number of possible sequences of samples grows exponentially with the number of rounds. 
Therefore we use the typical approach of simulations, again with risk limit $0.1$.

We simulate audits considering each candidate with a column in the results available at the Virginia Department of Elections website, including irrelevant ballots.
We consider a simple round schedule, in which each round is selected to give the same probability of stopping, $p$. That is, if the audit does not stop in the first round, we select a second round size which, given the sample drawn in the first round, will again have a probability of stopping $p$ in the second round. Note that since there are multiple candidates, we compute the minimum round size to achieve stopping probability $p$ for each pairwise contest between the winner and one of the losers, and we then select the largest such minimum round size and scale it up according to the proportion of the total ballots that are relevant to that pairwise contest. For this round schedule scheme, a one-and-done audit is achieved by choosing large $p$, say $p=.9$ or $p=.95$. We run $10^4$ trial audits for each value of $p$, assuming the reported results are correct\footnote{For this particular round schedule scheme, computing the expected number of rounds is straightforward analytically, but the expected number of ballots is still difficult, and so we use simulations.}. 

Note that simulations of audits of tied elections are not necessary, as all the audits we are considering are risk-limiting and hence we already know the performance to expect when auditing a tied election, even one not reported as such. 

\subsection{Person-hours}

\subsubsection{Average total ballots.} 
The simplest workload models are a function of just the total number of a ballots sampled.\footnote{Sometimes total \emph{distinct} ballots sampled is used, but for the margins we use in our examples in this section, the difference between total distinct ballots and total ballots is insignificant\cite{arxiv_athena}. It is straightforward to modify the model we discuss here to account for total distinct ballots.} Figure~\ref{fig:avg_bals} shows the average total number of ballots sampled as a function of $p$.
\begin{figure}[h!]
\includegraphics[width=.5\textwidth]{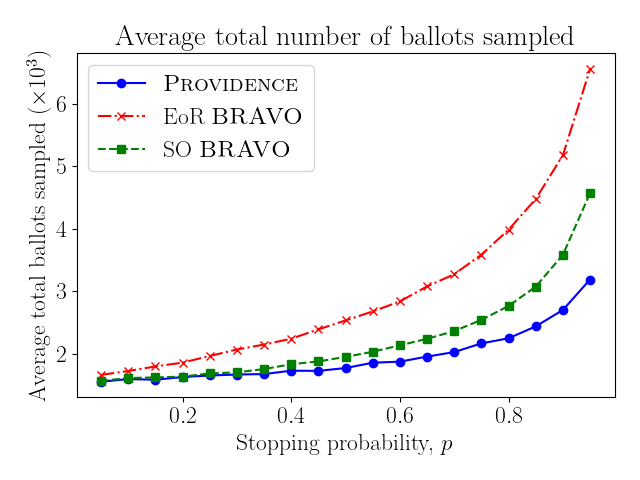}
\caption{The total number of ballots sampled on average, as a function of $p$, the conditional stopping probability used to select each round size, for ballot polling audits of the 2020 US Presidential election in the state of Virginia.}
\label{fig:avg_bals}
\end{figure}

Figure~\ref{fig:avg_bals_ratio} provides the same number as a fraction of the \Providence values.
It is straightforward to show that \Providence and both forms of \BRAVO collapse to the same test when each round corresponds to a single ballot. Figures~\ref{fig:avg_bals} and \ref{fig:avg_bals_ratio} show that for larger stopping probabilities $p$ (i.e. larger rounds), \Providence requires fewer ballots on average. In particular, the savings of \Providence become larger as $p$ increases; for $p=0.95$, EoR \BRAVO and SO \BRAVO require more than $2$ and $1.4$ times as many ballots as \Providence respectively. 
\begin{figure}[h!]
\includegraphics[width=.5\textwidth]{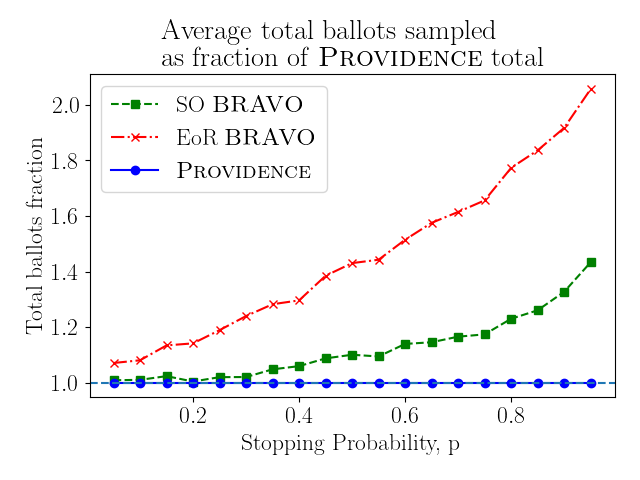}
\caption{The total number of ballots sampled on average, as a fraction of those sampled by \Providence, as a function of $p$, the conditional stopping probability used to select each round size, for ballot polling audits of the 2020 US Presidential election in the state of Virginia.}
\label{fig:avg_bals_ratio}
\end{figure}

\subsubsection{Round overhead.} 
It is clear that average number of ballots alone is an inadequate workload measure. 
(Consider a state conducting its audit by selecting a single ballot at random, 
notifying just the county where the ballot is located, and then waiting to hear back for the manual interpretation of the ballot before moving on to the next one. 
This of course is inefficient and is why audits are actually performed in rounds.)

In a US state-wide RLA, the state organizes the audit by determining the random sample and communicating with the counties, but election officials at the county level physically sample and inspect the ballots after drawing them from secure storage boxes stored in county locations. 
Therefore each audit round requires some number of person-hours for set up and communication between state and county. This overhead for a round includes choosing the round size, generating the random sample, and communicating that random sample to the counties, as well as the communication of the results back to the state afterwards. 

Consequently, we now consider a model with a constant per-ballot workload $w_b$ and a constant per-round workload $w_r$.
So for an audit with expected number of ballots $E_{b}$ and expected number of rounds $E_{r}$, we estimate that the workload $W$ of the audit is
\begin{equation}
W(E_b,E_r) = E_b w_b + E_r w_r + C
\label{eq:round_workload}
\end{equation}

Note there is also some constant overhead of workload for the whole audit, namely $C$ in Equation~\ref{eq:round_workload}, which we take to be zero in our examples but could be used by election officials to represent, for example, the effort of constructing a ballot manifest.
For simplicity, (and without loss of generality), we measure in multiples of the per ballot workload; that is, we assume it is one unit, $w_b=1$. A per round workload of $w_r=x$ corresponds to a per round workload which is $x$ times the per ballot workload. We use $w_r=1000$ as a conservative example. 
That is, we set the overhead of a round equal to the workload of sampling $1000$ ballots. Based on available data\cite{RI-report}, the time retrieving and analyzing each individual ballot is on the order of $75$ seconds which means that $w_r=1000$ is equivalent to roughly $20$ person-hours of workload. This corresponds to about $15$ minutes being spent, on average, per round in each of the $133$ counties of Virginia, a clearly conservative workload estimate. 

As shown in Figure~\ref{fig:with_round_workload}, average workloads first reduce as stopping probability increases; this is likely due to a decrease in the number of rounds. After hitting a sweet spot, average workloads again increase with stopping probability; this time, likely because the average number of rounds does not increase much and the cost changes because of number of ballots drawn, which increases with round size. \Providence achieves the lowest minimum average workload at roughly $p=0.7$ for our example choice of $w_r=1000$.

\begin{figure}[h!]
\includegraphics[width=.5\textwidth]{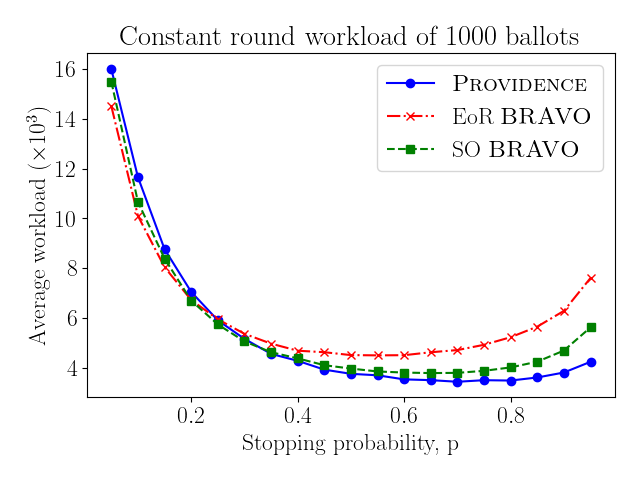}
\caption{For workload parameters $w_b=1$ and $w_r=1000$, this plot shows the expected workload for various round schedule parameters $p$. Expected workload is found using Equation~\ref{eq:round_workload} and the average number of ballots and rounds in our simulations as the expected number of ballots and rounds.}
\label{fig:with_round_workload}
\end{figure}

Importantly, this gives us a way to estimate the minimum expected workload, as well as which round schedule value $p$ achieves it, for arbitrary round workload. For each round workload $w_r$, we produce a dataset analagous to that of Figure~\ref{fig:with_round_workload} and then find the minimum average workload achieved for each of the audits and its corresponding stopping probability $p$. 

Figure~\ref{fig:optimal_workloads} shows the optimal achievable workload for a wide range of per round workloads. For very low round workloads, the workload function approaches just the total number of ballots, and so workload is minimized by minimizing the number of ballots drawn, which corresponds to small round sizes, and we would expect all three audits to behave similarly, as ballot-by-ballot audits, with the smallest workload. On the other hand, for extremely large values of round workload, the average number of ballots has little impact on the workload function, and so the three audits again have similar values, all corresponding to large round sizes in order to minimize the number of rounds.  We know that there is variation in the number of ballots used by each type of audit for large round sizes (a factor of two for $p=0.9$), but these values would be small in comparison to $w_r$. We observe this behaviour in Figure~\ref{fig:optimal_workloads} for extremely small and large workload values. For more reasonable values of the round workload $w_r$, SO \BRAVO and EoR \BRAVO achieve minimum workload roughly $1.1$ and $1.3$ times greater than that of \Providence.
\begin{figure}[h!]
\includegraphics[width=.5\textwidth]{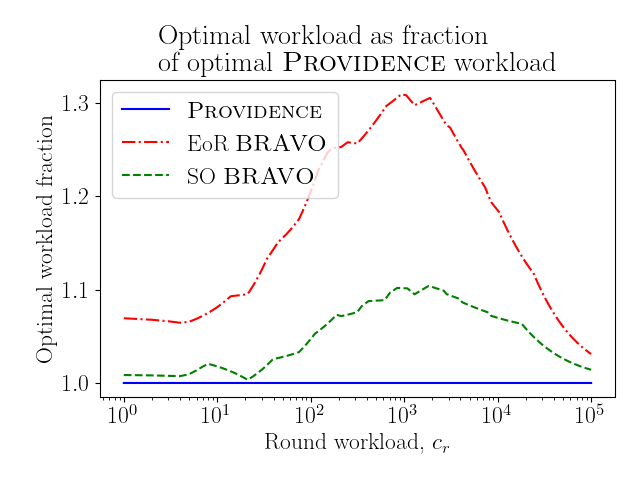}
\caption{For varying round workload $w_r$, the optimal average workload achievable by each audit, as a fraction of the \Providence values.}
\label{fig:optimal_workloads}
\end{figure}

Figure~\ref{fig:optimal_ps} shows the corresponding round schedule parameters $p$ that achieve these minimal workloads. As expected, an overhead for each round means that larger round sizes are needed to achieve an optimal audit, and so for all three audits $p$ increases as a function of $w_r$. Notice that \Providence is generally above and to the left of SO \BRAVO, and SO \BRAVO is generally above and to the left of EoR \BRAVO. This relationship reflects the fact that for the same round workload, \Providence can get away with a larger stopping probability because it requires fewer ballots.
\begin{figure}[h!]
\includegraphics[width=.5\textwidth]{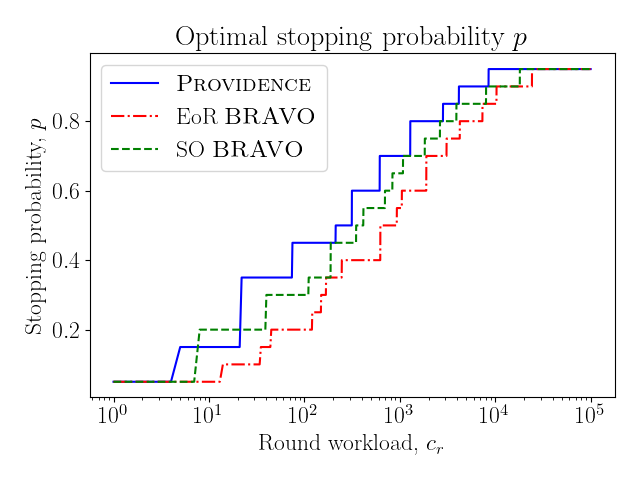}
\caption{The optimal (workload-minimizing) stopping probability $p$ for varying workload model parameters $w_r$. (Note that the steps in this function are a consequence of our subsampling the workload function. That is, the workload-minimizing value of $p$ for each $w_r$ is only allowed to take on values at increments of $0.05$.)}
\label{fig:optimal_ps}
\end{figure}

\subsubsection{Precinct overhead.} For a more complete model, we can also introduce container-level workload. If. around requires multiple ballots from a single container, the container need only be unsealed once. Based on a Rhode Island pilot RLA report\cite{RI-report}, this may mean that a ballot from a new container requires roughly twice the time as a ballot from an already-opened container. Typically available election results give per-precinct granularity of vote tallies, rather than individual container information. In Virginia, however, most precincts have a single ballot scanner whose one box has sufficient capacity for all the ballots cast in that precinct anyways, and so we model the per-container workload as a per-precinct workload, $w_p$. In this model, the workload estimate incurs an additional workload of $w_p$ every time a precinct is sampled from for the first time in a round. That is, let $E_{pi}$ be the expected number of distinct precincts sampled from in round $i$, and let $E_p=\sum_i E_{pi}$. Then the new model is
\begin{equation}
W(E_b, E_r, E_p) = E_b w_b + E_r w_r + E_p w_p + C
\label{eq:round_and_precinct_workload}
\end{equation}

We can again explore the minimum achievable workloads under this model, as shown in Figure~\ref{fig:optimal_workload_precinct_workload_ratio}.

\begin{figure}
\includegraphics[width=.5\textwidth]{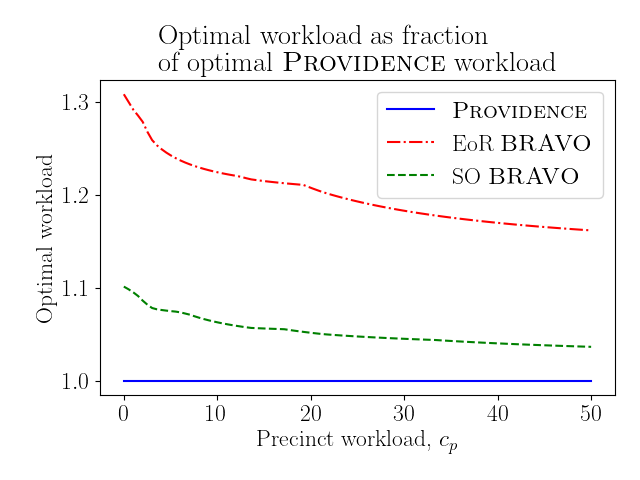}
\caption{Optimal average workload using the workload Equation~\ref{eq:round_and_precinct_workload} for varying $w_p$, given as a fraction of the value for \Providence. Similar to Figure~\ref{fig:optimal_workloads}, we show a generous range of values for the workload variable, $c_p$ in this case. If the time for a single ballot is $75$ seconds, then $c_p=50$ corresponds to over an hour of extra time to sample a ballot from a new container.}
\label{fig:optimal_workload_precinct_workload_ratio}
\end{figure}

\subsection{Real time}
Given tight certification deadlines\footnote{Virginia recently passed legislation requiring pre-certification RLAs.}, the total real time to conduct the RLA is also an important factor to consider when planning audits.
Because each county can sample ballots for the same round concurrently, the total real time for a round depends only on the slowest county. 
In Virginia, Fairfax County typically has the most votes cast by a significant difference; in the contest we consider, Fairfax County had ~551 thousand votes cast, more than double the ~203 thousand of second-highest Virginia Beach City.
Consequently, we model the expected total real time $T$ of an audit using just the largest county, and we define analagous variables for the expected values in just the largest county.
For the largest county, let the expected total ballots sampled be $\bar E_b$, the expected number of rounds $\bar E_r$, and the expected number of distinct precinct samples summed over all rounds be $\bar E_c$.
Similarly, we use real time per-ballot, per-round, and per-precinct workload variables, $t_b$, $t_r$, and $t_p$. So the real time of the audit is estimated by
\begin{equation}
T(\bar E_b, \bar E_r, \bar E_p ) = \bar E_b t_b + \bar E_r t_r + \bar E_p t_p + C
\label{eq:real_time}
\end{equation}

As before, we can use our simulations to estimate $\bar E_b$, $\bar E_r$, and $\bar E_p$ using the corresponding averages over the trials. 
Available data to estimate values for $t_b$, $t_r$, and $t_p$ is limited, and so we take as an example the values $t_b=75$ seconds, $t_r=3$ hours, and $t_p=75$ seconds.\footnote{The value $t_b=75$ seconds corresponds to a serial retrieval and interpretation of the ballots based on the \cite{RI-report} timing, $t_p=75$ seconds corresponds to the approximate doubling in time for new-box ballots as reported in \cite{RI-report} in the ballot-level comparison timing data, and $t_r=3$ hours is just a guess at an approximate order for this variable.} In practice, election officials could use our software and their own estimates of these values to explore choices for round schedules. Figure~\ref{fig:real_time} shows how the estimated real time for these values differs as a function of $p$. It should be noted that real values of $t_b$, $t_r$, and $t_p$ will vary greatly based on the number of parallel teams retrieving and checking ballots, the distribution of ballots and containers both in number and physical space, and other factors. We provide Figure~\ref{fig:real_time} only as an example of the general shape and behavior of this function. Use of this optimal scheduling tool would depend on parameter estimates tailored to each case.

\begin{figure}
\includegraphics[width=.5\textwidth]{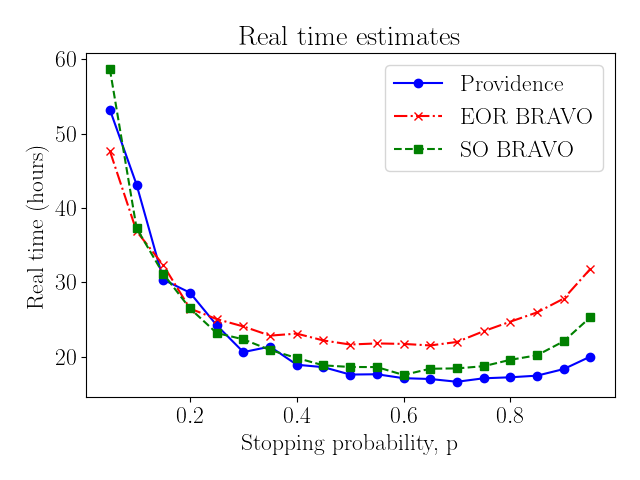}
\caption{The real time as estimated by Equation~\ref{eq:real_time} for varying $p$ with expected values as estimated by our simulations.}
\label{fig:real_time}
\end{figure}

\subsection{Misleading samples}

Unfortunately, efficiency alone is not sufficient for planning audits. In the US today, election officials have a legitimate need to include personal safety as a consideration.
In a random sample, a true loser may receive more votes than the true winner. This happens more often when the sample sizes are small, like for a hypothetical first round size of $11$ in the pilot audit, as seen in Figure~\ref{fig:pilot_sequence}.
In the abstract, a misleading sample in an early round is dealt with by drawing more ballots (moving on to another round), but in practice the implications of this approach may be dangerous.

Imagine that Alice beats Bob in an election contest both truly and in the reported results, but Bob's supporters are insistent he really won. When election officials carry out the RLA, they choose a small first round size in the hopes of achieving an efficient audit by getting to stop sooner (and drawing fewer ballots on average). After the first round, by chance, there are more votes for Bob than for Alice in the sample. Bob's supporters celebrate their victory that the audit has in fact revealed that Bob really won, but the election officials have to explain that they are moving on to a second round. After the second round, there are more votes in the sample for Alice and sufficiently many that the risk limit is met and the audit now ends confirming the announced result that Alice won. This is an undesirable situation, as it can appear to Alice's supporters that election officials are simply drawing ballots till a chosen outcome is obtained. 

We introduce the notion of a \emph{misleading sample}, any cumulative sample which, assuming the announced outcome is correct, contains more ballots for a loser than for the winner.
We can again use our simulations to gain insight into the frequency of \emph{misleading samples}.
For each stopping probability $p$, Figure~\ref{fig:misleading} gives the proportion of simulated audits that had a \emph{misleading sample} at any point. 
Notably, this proportion is as high as 1 in 5 for the smaller stopping probability round schedules.
Accordingly, we introduce a new parameter to our audit-planning tool, the maximum acceptable probability that the audit is misleading, the \emph{misleading limit}.

In Figure~\ref{fig:misleading}, horizontal lines are included to show \emph{misleading limits} of $0.1$, $0.01$, and $0.001$.
To achieve a probability of a misleading sample of at most $0.1$, a round schedule with at least roughly $p=.3$ is needed.
To achieve a probability of misleading of roughly $0.01$, a round schedule with $p=0.8$ is needed, and to achieve a probability of misleading of roughly $0.001$, a round schedule with $p=0.95$ is needed.
It is not unreasonable to think that election officials might choose a \emph{misleading limit} of $0.01$, or smaller, given the state of public perception of election security in the US and the associated threats of violence.
Consequently, the desired \emph{misleading limit} may be a deciding constraint in the choice of round schedule. 

\begin{figure}
\includegraphics[width=.5\textwidth]{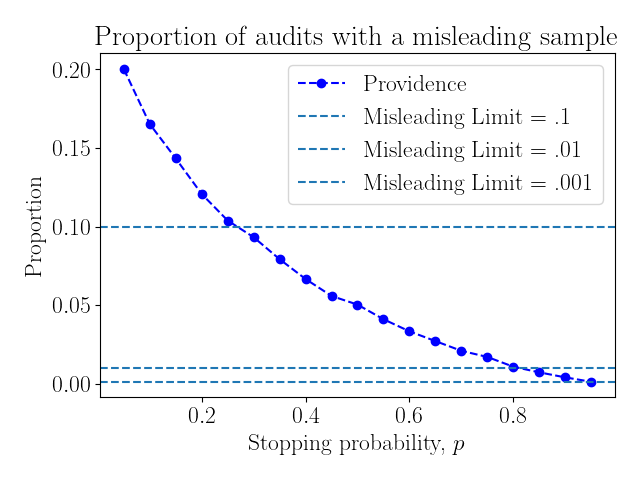}
\caption{The proportion of simulated \Providence audits for the Virginia contest parameters that had a \emph{misleading sample} in any round.}
\label{fig:misleading}
\end{figure}

We observe a similar behavior in our simulations of audits on the contest from the pilot audit. Figure~\ref{fig:pilot_misleading} shows the proportion of the pilot simulations which contained a \emph{misleading sample} in any round. Despite the large difference in margin ($\sim 0.05$ in Virginia and $\sim 0.25$ in the pilot) we still observe that a \emph{misleading limit} of $0.01$ is first achieved at roughly $p=0.8$ and $0.001$ at $p=0.95$.

\begin{figure}
\includegraphics[width=.5\textwidth]{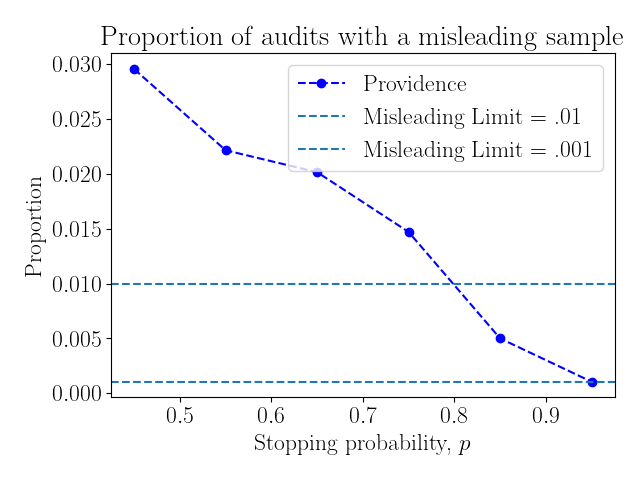}
\caption{The proportion of simulated \Providence audits for the pilot audit parameters that had a \emph{misleading sample} in any round.}
\label{fig:pilot_misleading}
\end{figure}

If election officials wish to enforce a \emph{misleading limit} for all the rounds, our simulation analysis could help. On the other hand, for a given round, it is straightforward to compute analytically the probability that a loser has more votes than the winner in the sample. Table~\ref{tab:misleading} shows for various margins the minimum first round size $n$ that guarantees a probability of a \emph{misleading sample} at most $M\in\{0.1,0.01,0.001\}$. For all values of $M$ and all margins, \Providence achieves a higher probability of stopping than either EoR \BRAVO or SO \BRAVO. 
    As seen in the Table~\ref{tab:misleading}, to enforce $M=0.01$ requires minimum round sizes with at least roughly a $0.8$ probability of stopping in the first round. Even if the most efficient audit schedule (by either workload or real time measures) would use a lower stopping probability $p$ to choose the first round size, the election officials may opt to use this constraint on the probability of a \emph{misleading sample} as the deciding factor in planning their audits.

\subsubsection{Misleading SO \BRAVO sequences.} As we consider the idea of misleading samples, it is noteworthy that SO \BRAVO suffers from a different and unique type of misleading result. 

After drawing a cumulative $n>1$ ballots in a round, some number $k$ of them are votes for the announced winner. There are $\binom{n}{k}$ possible sequences of ballots which can lead to such a sample. Given a value of $k$, however, the particular sequence of the sample that led to that value of $k$ contains no additional information about whether the sample is more likely under the alternative or null hypotheses. That is to say, $\Pr[K=k|H_a]$ and $\Pr[K=k|H_0]$ have the same value regardless of the sequence.
Despite this, the SO \BRAVO RLA stopping condition is not just a function of $n$ and $k$ but also a function of the sequence, the selection order. In particular, if the sequence of ballots is such that the standard \BRAVO stopping condition was met for some $n'<n$ and corresponding $k'<n$, the audit will stop, even if by the end of the sequence the values $k$ and $n$ no longer meet the \BRAVO condition. We refer to such sequences which stop under SO \BRAVO, but not under EoR \BRAVO, as \emph{misleading sequences}. To be clear, this is not a mathematical issue; stopping in such cases is still a correct application of Wald's SPRT result\cite{wald}. The misleading nature of such stoppages is the note we are making. This is another case where election officials might have difficulty explaining the misleading situation to the public.

Recall from Section~\ref{sec:pilot} that the pilot \Providence RLA performed in Providence, Rhode Island had an SO \BRAVO \emph{misleading sequence}. In particular, the audit passed with an SO \BRAVO risk measure of $0.0541$ but the final cumulative tally of the sample gives a \BRAVO risk measure of $0.366$.

It is easy to use our simulations to see how often SO \BRAVO \emph{misleading sequences} occur by checking whether the final cumulative sample of each SO \BRAVO trial meets the EoR \BRAVO stopping condition and counting those which do not. Figure~\ref{so_misleading} shows the proportion of simulated SO \BRAVO audits that stopped with a \emph{misleading sequence}. Unlike the more general \emph{misleading sample} discussed so far, these \emph{misleading sequences} are unique to SO \BRAVO audits, and Figure~\ref{so_misleading} only shows the proportion of audits that stopped with a \emph{misleading sequence}; additional SO \BRAVO audits also contained \emph{misleading samples}.

\begin{figure}
\includegraphics[width=.5\textwidth]{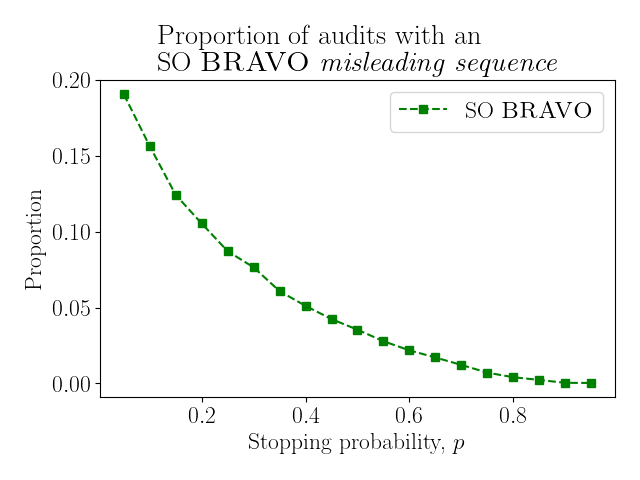}
\caption{The proportion of sequences that are misleading sequences in the SO \BRAVO audit as a function of $p$.}
\label{so_misleading}
\end{figure}

\section{Conclusion}
\label{sec:conc}
A rigorous tabulation audit is an important part of a secure election. We present \Providence and demonstrate that it is as efficient as \Minerva and as flexible as \BRAVO. We present proofs and simulation results to verify the claimed properties of \Providence, and we provide an open source implementation of the stopping condition and useful related functionality for planning audits. We define the constraint of an acceptable probability of a misleading audit sample, and describe its importance to the planning process.

\begin{table}[h!]
\center
\begin{tabular}{ |l|l|r|c|c|c| }
\hline
$M$ & margin & $n$ & Prov & SO & EoR \\
\hline
0.1&0.25&25&0.221&0.152&0.115\\
&0.15&73&0.202&0.186&0.141\\
&0.05&657&0.227&0.192&0.127\\
&0.03&1825&0.246&0.194&0.124\\
&0.01&16423&0.246&0.196&0.124\\
\hline
0.01&0.25&85&0.792&0.707&0.559\\
&0.15&239&0.817&0.712&0.549\\
&0.05&2163&0.817&0.721&0.569\\
&0.03&6011&0.824&0.723&0.573\\
&0.01&54117&0.824&0.724&0.57\\
\hline
0.001&0.25&149&0.962&0.889&0.783\\
&0.15&421&0.958&0.894&0.801\\
&0.05&3815&0.96&0.896&0.785\\
&0.03&10607&0.961&0.897&0.787\\
&0.01&95491&0.962&0.897&0.787\\
\hline
\end{tabular}
\caption{For various margins, this table gives the minimum first round size $n$ to achieve at most a probability $M$ of a \emph{misleading sample} in the first round. The corresponding stopping probabilities of \Providence, SO \BRAVO, and EoR \BRAVO are given for each value of $n$.}
\label{tab:misleading}
\end{table}

\section{Availability}
\label{sec:avail}
\Providence is implemented in the open source R2B2 software library for \R and \B audits \cite{r2b2} 
We provide software to test stopping conditions, find round sizes to achieve a given probability of stopping, and find round sizes that have acceptable probabilities of a \emph{misleading sample}. The software also includes the simulator for all these audits and the functionality to perform the workload and real time analysis we present in this paper.

\section{Acknowledgements}
\label{sec:ack}
The authors are grateful to the Rhode Island Board of Elections for conducting a pilot \Providence RLA, and to Georgina Cannan, Liz Howard, Mark Lindeman, and John Marion for their support of the pilot.  The authors thank Audrey Malagon for useful information on audits.

\bibliographystyle{plain}
\bibliography{audits.bib}


\appendix

\section{Proofs}
\label{sec:proofs}
\textbf{Lemma \ref{lem:efficiency}.}
For any risk-limit $\alpha \in (0, 1)$, for any margin
and for any round schedule $[n_1, \ldots, n_j]$, 
the \Providence RLA is more efficient than EoR \BRAVO.

\begin{proof}
Let $[n_1, \ldots, n_j]$ be a round schedule, and assume that an EoR \BRAVO audit stops in round $j$, after observing $k_1, \ldots, k_j$ ballots for the announced winner in each round respectively.
That is, the EoR \BRAVO stopping condition is true:
$$\sigma(k_j,p_a,p_0,n_j) \ge \frac{1}{\alpha}.$$
To see the \Providence stopping condition is fulfilled, we rewrite as 



\[
 \frac{1}{\alpha} \leq \sigma(k_{j}, p_a, p_0, n_{j}) 
\]
\[
 = \sigma(k_{j-1}, p_a, p_0, n_{j-1}) \cdot \sigma(k_j - k_{j-1}, p_a, p_0, n_j - n_{j-1})  
\]
\[
 \leq^{(*)} \sigma(k_{j-1}, p_a, p_0, n_{j-1}) \cdot \tau_1(k_j - k_{j-1}, p_a, p_0, n_j - n_{j-1}) 
\]
\[
 = \omega_r(k_j, k_{j-1}, p_a, p_0, n_j, n_{j-1}).
\]

Where inequality $(*)$ follows from \cite[Theorem 6]{arxiv_athena}. Note that we apply this result on $\tau_j$ for just $j=1$.

%
%
%
\end{proof}
\begin{lemma}
    \label{lemma:sigma_increasing}
For $0<p_0< p_a< 1$ and $n>0$, the ratio $\sigma(k,p_a,p_0,n)$ is strictly increasing as a function of $k$ for $0\le k\le n$.
\end{lemma}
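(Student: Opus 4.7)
The plan is to verify strict monotonicity by examining the ratio of consecutive values $\sigma(k+1,p_a,p_0,n)/\sigma(k,p_a,p_0,n)$, which is the cleanest route given that $k$ takes discrete values in $\{0,1,\ldots,n\}$. First I would rewrite $\sigma$ from Definition~\ref{def:bravo-ratio} in the separated form
\[
\sigma(k,p_a,p_0,n) \;=\; \left(\frac{p_a}{p_0}\right)^{\!k}\left(\frac{1-p_a}{1-p_0}\right)^{\!n-k},
\]
which makes the dependence on $k$ transparent: it is a product of $k$-dependent terms of the shape $a^k b^{n-k}$.

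Next I would compute the multiplicative increment in $k$. A direct substitution gives
\[
\frac{\sigma(k+1,p_a,p_0,n)}{\sigma(k,p_a,p_0,n)} \;=\; \frac{p_a}{p_0}\cdot\frac{1-p_0}{1-p_a} \;=\; \frac{p_a(1-p_0)}{p_0(1-p_a)}.
\]
Notice that this ratio is independent of $k$ and $n$, so the whole question reduces to a single scalar inequality.

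The remaining step is to check that this ratio exceeds $1$. Cross-multiplying (valid since $0<p_0,p_a<1$ implies $p_0(1-p_a)>0$), the inequality $p_a(1-p_0) > p_0(1-p_a)$ simplifies, after canceling the common $p_a p_0$ term on both sides, to $p_a > p_0$, which is precisely the hypothesis. Hence $\sigma(k+1,p_a,p_0,n) > \sigma(k,p_a,p_0,n)$ for every $0\le k\le n-1$, proving strict monotonicity in $k$.

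I do not anticipate any real obstacle here: the lemma is essentially a one-line consequence of the monotone likelihood ratio property for the Bernoulli family, and the only mild care needed is to ensure the denominators $p_0$ and $1-p_a$ are strictly positive (guaranteed by $0<p_0<p_a<1$) so that the ratio computation and cross-multiplication are legitimate.
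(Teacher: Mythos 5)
Your proof is correct. The paper itself does not spell out an argument here---it simply cites Lemma~4 of the \Minerva paper---and your consecutive-ratio computation, showing $\sigma(k+1,p_a,p_0,n)/\sigma(k,p_a,p_0,n)=\frac{p_a(1-p_0)}{p_0(1-p_a)}>1$ exactly when $p_a>p_0$, is the standard and expected route for that cited result, with the positivity caveats handled properly.
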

\begin{proof}
See \cite[Lemma 4]{usenix_minerva}. 
\end{proof}

\begin{lemma}
    \label{lemma:frac_sums_increasing}
    Given a monotone increasing sequence: $\frac{a_1}{b_1}, \frac{a_2}{b_2}, \ldots, \frac{a_n}{b_n}$, for $a_i, b_i > 0$, the sequence:
    $$z_i = \frac{\sum_{j=i}^n a_j}{\sum_{j=i}^n b_j}$$
    is also monotone increasing.
\end{lemma}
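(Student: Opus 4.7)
The plan is to prove the lemma by exhibiting $z_{i+1} - z_i$ in closed form and observing that monotonicity of the ratios $a_j / b_j$ makes the result nonnegative term by term; this avoids induction entirely.

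Concretely, I would set $A = \sum_{j=i+1}^n a_j$ and $B = \sum_{j=i+1}^n b_j$, so that $z_{i+1} = A/B$ and $z_i = (a_i + A)/(b_i + B)$. Placing these over a common denominator and simplifying, the cross term $AB$ cancels and one obtains
\[
z_{i+1} - z_i \;=\; \frac{A\,b_i - a_i\,B}{B\,(b_i + B)} \;=\; \frac{\sum_{j > i}\bigl(b_i\,a_j - a_i\,b_j\bigr)}{B\,(b_i + B)}.
\]
The denominator is strictly positive since every $b_j > 0$. Each summand in the numerator has the same sign as $a_j/b_j - a_i/b_i$, which is nonnegative by the hypothesis that the ratio sequence is monotone increasing and $j > i$. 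Hence $z_{i+1} \geq z_i$ for every $i < n$, which is exactly the claim.

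I do not expect any genuine obstacle here; the argument is essentially the \emph{mediant inequality} (for positive $p,q,r,s$ with $p/q \leq r/s$, one has $p/q \leq (p+r)/(q+s) \leq r/s$) applied along the tail, and the only care needed is the algebraic rearrangement that makes the cancellation of the cross term $AB$ visible. An alternative write-up would cite the mediant inequality directly and proceed by downward induction on $i$ from $i = n$, carrying the joint invariant $z_i \geq a_i/b_i$ and $z_i \leq z_{i+1}$; the auxiliary clause $z_i \geq a_i/b_i$ is essential, since without it the inductive step cannot bootstrap $a_i/b_i \leq a_{i+1}/b_{i+1} \leq z_{i+1}$ needed to invoke the mediant inequality on $(a_i, b_i)$ and $\bigl(\sum_{j>i} a_j,\sum_{j>i} b_j\bigr)$.
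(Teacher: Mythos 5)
Your argument is correct. The identity
\[
z_{i+1} - z_i = \frac{A\,b_i - a_i\,B}{B\,(b_i+B)} = \frac{\sum_{j>i}\bigl(b_i a_j - a_i b_j\bigr)}{B\,(b_i+B)}
\]
holds as you compute (the cross term $AB$ does cancel), the denominator is positive because all $b_j>0$, and each numerator term equals $b_i b_j\,(a_j/b_j - a_i/b_i)\ge 0$ by the monotonicity hypothesis, so $z_{i+1}\ge z_i$. The one point of comparison worth noting is that the paper does not actually prove this lemma in situ: its ``proof'' is a citation to Lemma~2 of the \Minerva paper \cite{usenix_minerva}, so your write-up supplies a self-contained argument where the paper outsources one. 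Your direct computation is essentially the mediant inequality unrolled along the tail sums, and your sketched alternative (downward induction on $i$ carrying the joint invariant $z_i\ge a_i/b_i$ and $z_i\le z_{i+1}$) is also sound --- you are right that the auxiliary clause $z_i\ge a_i/b_i$ is what lets the induction bootstrap $a_i/b_i\le a_{i+1}/b_{i+1}\le z_{i+1}$ before applying the mediant inequality to $(a_i,b_i)$ and the tail. Either version would serve as a complete replacement for the citation; the closed-form difference is the shorter of the two and avoids stating the invariant explicitly.
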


\begin{proof}
See \cite[Lemma 2]{usenix_minerva}. 
\end{proof}

\begin{lemma}
    \label{lemma:tau1_increasing}
For $0<p_0< p_a< 1$ and $n>0$, the ratio $\tau_1(k,p_a,p_0,n)$ is strictly increasing as a function of $k$ for $0\le k\le n$.
\end{lemma}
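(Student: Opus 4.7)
The plan is to express $\tau_1$ as a ratio of binomial tail sums and then invoke the two immediately preceding lemmas. Expanding the definition gives
$$
\tau_1(k,p_a,p_0,n) \;=\; \frac{\sum_{j=k}^{n} \binom{n}{j} p_a^{\,j}(1-p_a)^{n-j}}{\sum_{j=k}^{n} \binom{n}{j} p_0^{\,j}(1-p_0)^{n-j}},
$$
so it is natural to set $a_j = \binom{n}{j} p_a^{\,j}(1-p_a)^{n-j}$ and $b_j = \binom{n}{j} p_0^{\,j}(1-p_0)^{n-j}$. Both are strictly positive for $0 < p_0 < p_a < 1$, and the binomial coefficients cancel in the ratio so that $a_j/b_j = \sigma(j,p_a,p_0,n)$.

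Next, by Lemma~\ref{lemma:sigma_increasing} the pointwise ratio sequence $a_j/b_j = \sigma(j,p_a,p_0,n)$ is strictly increasing in $j$ on $\{0,1,\ldots,n\}$. Applying Lemma~\ref{lemma:frac_sums_increasing} to these $a_j, b_j$ then gives that the tail-sum ratio
$$
z_k \;=\; \frac{\sum_{j=k}^{n} a_j}{\sum_{j=k}^{n} b_j} \;=\; \tau_1(k,p_a,p_0,n)
$$
is monotone increasing in $k$, which is exactly the claim.

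The only minor subtlety is upgrading ``monotone increasing'' to ``strictly increasing'' as stated in the lemma. This follows from the fact that the pointwise ratios $\sigma(j,p_a,p_0,n)$ are strictly (not merely weakly) increasing: for any $k<n$, when one passes from $z_k$ to $z_{k+1}$ one is removing from both numerator and denominator the term of index $j=k$, whose ratio $\sigma(k,p_a,p_0,n)$ is strictly less than $\sigma(j,p_a,p_0,n)$ for every $j > k$; a standard mediant-style inequality then shows $z_{k+1} > z_k$. I expect this strictness upgrade to be the only point requiring any real care, and it is essentially immediate; the rest of the argument is just unpacking definitions and citing the two preceding lemmas.
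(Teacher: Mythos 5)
Your proof is correct and follows essentially the same route as the paper, which simply cites Lemma~\ref{lemma:sigma_increasing} and Lemma~\ref{lemma:frac_sums_increasing} applied to the binomial tail sums. Your additional care in upgrading the conclusion from ``monotone increasing'' to ``strictly increasing'' (via the mediant-style argument) is a detail the paper's one-line proof glosses over, since Lemma~\ref{lemma:frac_sums_increasing} as stated only yields monotonicity.
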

\begin{proof}
    Apply Lemmas \ref{lemma:sigma_increasing}-\ref{lemma:frac_sums_increasing}.
\end{proof}
\begin{lemma}
    \label{lemma:imin-exists}
    Given a strictly monotone increasing sequence: $x_1, x_2, \ldots x_n $ and some constant $A$,
    $$A \le x_i \Leftrightarrow \exists i_{min} \le i ~\text{s.t.}~   x_{i_{min} -1} < A \le x_{i_{min}} \le x_{i},$$
    unless $A\le x_1$, in which case $i_{min} =1 $.
\end{lemma}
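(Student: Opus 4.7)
The plan is to prove the biconditional by the standard well-ordering argument on the set of indices that meet the threshold. The backward direction is immediate from transitivity, so the forward direction carries the real content, though even it is essentially a restatement of ``least element of a nonempty finite set of naturals exists.''

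For the backward direction ($\Leftarrow$): if such an $i_{min}$ exists with $A \le x_{i_{min}} \le x_i$, then by transitivity $A \le x_i$. This needs one line.

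For the forward direction ($\Rightarrow$): I would assume $A \le x_i$ and define the index set
$$S \triangleq \{j \in \{1,2,\ldots,n\} : A \le x_j\}.$$
Since $i \in S$, the set $S$ is a nonempty subset of the positive integers, hence has a minimum element, which I call $i_{min}$. By definition of $S$, we have $A \le x_{i_{min}}$, and $i_{min} \le i$. Since the sequence is monotone increasing and $i_{min} \le i$, we get $x_{i_{min}} \le x_i$. For the remaining inequality $x_{i_{min}-1} < A$, I would split into cases. If $i_{min} = 1$, we are in the stated special case and there is nothing more to show. If $i_{min} \ge 2$, then by minimality of $i_{min}$ in $S$ the index $i_{min}-1 \notin S$, which by definition of $S$ means $A > x_{i_{min}-1}$, i.e.\ $x_{i_{min}-1} < A$.

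There is no significant obstacle here; the only thing requiring any care is the bookkeeping of the $i_{min}=1$ boundary case (where the symbol $x_{i_{min}-1} = x_0$ is undefined), and this is precisely what the ``unless $A\le x_1$'' clause of the statement handles. The strict monotonicity of the sequence is not actually needed for the lemma's conclusion as stated (weak monotonicity suffices), which is worth noting but does not affect the proof.
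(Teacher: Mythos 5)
Your proof is correct and complete; the paper itself dismisses this lemma with the single word ``Evident,'' so your well-ordering argument (take $i_{min}$ to be the least index $j$ with $A \le x_j$, then use minimality for $x_{i_{min}-1} < A$ and monotonicity for $x_{i_{min}} \le x_i$) is simply the standard argument the authors left implicit. Your side remark that weak monotonicity suffices is also accurate and does not affect anything downstream.
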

\begin{proof}
    Evident.
\end{proof}

\begin{lemma}
    \label{lemma:minerva2_kmin_exists}
    For $\mathcal{A}=(\alpha,p_a, p_0,k_{j-1},n_{j-1},n_j)$-\Providence, there exists\\ a 
    $k^{p_a, p_0, \alpha, k_{j-1}}_{min, j, n_{j-1}, n_j}  = 
    k_{min,j}(\Providence, p_a, p_0, k_{j-1}, n_{j-1}, n_j)$ such that $$\mathcal{A}(X_j)=\text{Correct}\iff k_j\ge k_{min,j}(\Providence,  \bm{n_j}, p_a, p_0).$$
\end{lemma}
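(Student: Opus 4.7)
The plan is to exploit the fact that the Providence stopping statistic $\omega_j$, viewed as a function of $k_j$ with all other arguments held fixed, is strictly monotone increasing, and then invoke Lemma~\ref{lemma:imin-exists} to extract the threshold $k_{min,j}$.

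First I would handle the base case $j=1$ separately, since by Definition~\ref{def:minervatwo} we have $\omega_1 \equiv \tau_1$. By Lemma~\ref{lemma:tau1_increasing}, the sequence $\tau_1(k_1,p_a,p_0,n_1)$ for $k_1 = 0,1,\ldots,n_1$ is strictly increasing, so Lemma~\ref{lemma:imin-exists} applied with constant $A = 1/\alpha$ yields the desired $k_{min,1}^{p_a,p_0,\alpha,0}$.

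For $j\ge 2$, I would fix $p_a,p_0,\alpha,k_{j-1},n_{j-1},n_j$ and view
\[
\omega_j(k_j) = \sigma(k_{j-1},p_a,p_0,n_{j-1}) \cdot \tau_1(k_j - k_{j-1}, p_a, p_0, n_j - n_{j-1})
\]
as a function of $k_j \in \{k_{j-1},k_{j-1}+1,\ldots,n_j\}$. The first factor $\sigma(k_{j-1},p_a,p_0,n_{j-1})$ is a strictly positive constant (independent of $k_j$), and the second factor is $\tau_1$ evaluated at the shifted argument $k_j - k_{j-1}$, which by Lemma~\ref{lemma:tau1_increasing} is strictly increasing in $k_j - k_{j-1}$, hence strictly increasing in $k_j$. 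Multiplying by the positive constant preserves strict monotonicity, so $\omega_j(k_j)$ is strictly increasing in $k_j$. Invoking Lemma~\ref{lemma:imin-exists} with $A = 1/\alpha$ produces the threshold $k_{min,j}^{p_a,p_0,\alpha,k_{j-1}}$, and by the characterization in that lemma, $\mathcal{A}(X_j) = \text{Correct} \iff \omega_j(k_j) \ge 1/\alpha \iff k_j \ge k_{min,j}$.

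The only subtlety I expect is the edge case where $1/\alpha$ exceeds $\omega_j(n_j)$, i.e.\ no attainable $k_j$ satisfies the stopping condition; the standard convention is to set $k_{min,j} \triangleq n_j + 1$ in that case so that the equivalence $\mathcal{A}(X_j) = \text{Correct} \iff k_j \ge k_{min,j}$ still holds vacuously. A symmetric convention handles the case $1/\alpha \le \omega_j(k_{j-1})$ by taking $k_{min,j} = k_{j-1}$. Beyond these bookkeeping adjustments, the argument is a direct application of monotonicity of $\tau_1$ plus Lemma~\ref{lemma:imin-exists}.
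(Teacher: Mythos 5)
Your proposal is correct and follows essentially the same route as the paper's own proof: establish strict monotonicity of $\omega_j$ in $k_j$ (via Lemma~\ref{lemma:tau1_increasing} for $\tau_1$ and the observation that the $\sigma$ factor is a positive constant in $k_j$), then apply Lemma~\ref{lemma:imin-exists} with $A=1/\alpha$. Your explicit handling of the boundary conventions (setting $k_{min,j}=n_j+1$ when no attainable $k_j$ meets the threshold) is a small point of added care that the paper leaves implicit, but it does not change the argument.
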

\begin{proof}
    From Definition~\ref{def:minervatwo}, $$\mathcal{A}(X_j)=\text{Correct}\iff \omega_j(k_{j}, k_{j-1}, p_a, p_0, n_j, n_{j-1}) \ge \frac{1}{\alpha}.$$
    Now to apply Lemma~\ref{lemma:imin-exists}, it suffices to show that
    $\omega_j$ is monotone increasing with respect to $k_j$.
    For $j=1$, we have $\omega_1=\tau_1$, so $\omega_1$ is strictly increasing by Lemma \ref{lemma:tau1_increasing}. For $j\ge 2$,
    $$\omega_j(k_j,k_{j-1},p_a,p_0,n_j,n_{j-1},\alpha)=$$$$\sigma(k_{j-1},p_a,p_0,n_{j-1})\cdot \tau_1(k_{j}-k_{j-1},p_a,p_0,n_j-n_{j-1}).$$
    As a function of $k_j$, $\sigma$ is constant, and thus $\omega$ is strictly increasing by Lemma \ref{lemma:tau1_increasing}. Therefore by Lemma \ref{lemma:imin-exists}, we have the desired property.
\end{proof}

\begin{lemma}
\label{lemma:any_ratio_is_sigma_simple}
For $j\ge 1$,
$$\frac{Pr[\bm{K_j}=\bm{k_j} \mid \bm{n_j}, H_a]}{Pr[\bm{K_j}=\bm{k_j} \mid \bm{n_j}, H_0]} = \sigma(k_j, p_a, p_0, n_j).$$
\end{lemma}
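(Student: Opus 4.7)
The plan is to recognize that the joint probability $\Pr[\bm{K_j}=\bm{k_j}\mid \bm{n_j}, H]$ factors over the rounds, because (under the Bernoulli/with-replacement sampling model used throughout the paper) the marginal increments $K_i - K_{i-1}$ are independent, each distributed as $\mathrm{Binom}(n_i - n_{i-1}, p)$ where $p \in \{p_a, p_0\}$ depends on the hypothesis. Once this is in hand, the ratio collapses by two elementary observations: the binomial coefficients are hypothesis-free and cancel, and the exponents telescope to give totals $k_j$ and $n_j-k_j$.

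Concretely, I would first write, with the convention $k_0 = n_0 = 0$,
\[
\Pr[\bm{K_j}=\bm{k_j}\mid \bm{n_j}, H]
= \prod_{i=1}^{j} \binom{n_i-n_{i-1}}{k_i-k_{i-1}}\, p^{\,k_i-k_{i-1}}\,(1-p)^{\,(n_i-n_{i-1})-(k_i-k_{i-1})},
\]
with $p=p_a$ under $H_a$ and $p=p_0$ under $H_0$. Dividing the $H_a$-expression by the $H_0$-expression cancels every binomial coefficient factor and leaves
\[
\prod_{i=1}^{j}\frac{p_a^{\,k_i-k_{i-1}}(1-p_a)^{(n_i-n_{i-1})-(k_i-k_{i-1})}}{p_0^{\,k_i-k_{i-1}}(1-p_0)^{(n_i-n_{i-1})-(k_i-k_{i-1})}}.
\]
Finally, using $\sum_{i=1}^{j}(k_i-k_{i-1})=k_j$ and $\sum_{i=1}^{j}(n_i-n_{i-1})=n_j$, this reduces to $\frac{p_a^{\,k_j}(1-p_a)^{n_j-k_j}}{p_0^{\,k_j}(1-p_0)^{n_j-k_j}}$, which is exactly $\sigma(k_j, p_a, p_0, n_j)$ by Definition~\ref{def:bravo-ratio}.

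An equivalent inductive presentation would note the base case $j=1$ directly from Definition~\ref{def:bravo-ratio}, and then for the step use the multiplicative identity $\sigma(a, p_a, p_0, b)\cdot \sigma(c, p_a, p_0, d) = \sigma(a+c, p_a, p_0, b+d)$ together with the independence of round $j$'s draws from prior rounds to combine $\sigma(k_{j-1},\cdot,\cdot,n_{j-1})$ with $\sigma(k_j-k_{j-1},\cdot,\cdot,n_j-n_{j-1})$. There is no real obstacle here; the only substantive modeling step is the appeal to round-wise independence of the draws. This assumption is implicit throughout the paper (it is what makes each per-round probability a simple Bernoulli product in the BRAVO and Minerva ratios), so it is safe to invoke without further justification. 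The rest is bookkeeping of exponents.
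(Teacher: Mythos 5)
Your proposal is correct and follows essentially the same route as the paper: the paper proves the identity by induction on rounds, factoring the joint probability as the probability through round $m$ times the conditional binomial probability of the marginal draw, cancelling the binomial coefficients, and combining exponents — which is exactly the unrolled product-and-telescope argument you give (and the inductive variant you sketch at the end is the paper's proof verbatim in spirit). No gaps; the appeal to round-wise independence of draws is the same modeling assumption the paper relies on.
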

\begin{proof}
We induct on the number of rounds.
For $j=1$, we have
$$\frac{Pr[\bm{K_1}=\bm{k_1} \mid \bm{n_1},H_a]}{Pr[\bm{K_1}=\bm{k_1} \mid  \bm{n_1},H_0]} =\frac{Pr[K_1 = k_{1} \mid n_1,H_a]}{Pr[K_1 = k_1 \mid n_1,H_0]} $$$$= \frac{\text{Bin}(k_1,n_1,p_a)}{\text{Bin}(k_1,n_1,p_0)}=\sigma(k_1, p_a, p_0, n_1).$$
Suppose the lemma is true for round $j=m$ with history $\bm{k_m}$.
Observe that
 $$\frac{Pr[\bm{K_{m+1}}=\bm{k_{m+1}} \mid \bm{n_{m+1}},H_a]}{Pr[\bm{K_{m+1}}=\bm{k_{m+1}} \mid \bm{n_{m+1}}, H_0]} $$$$= \frac{ Pr[\bm{K_{m}}=\bm{k_{m}}\mid \bm{n_{m+1}},H_a] \cdot Pr[K_{m+1}'=k_{m+1}'|\bm{k_m},\bm{n_{m+1}},H_a]}{ Pr[\bm{K_{m}}= \bm{k_{m}} \mid  \bm{n_{m+1}},H_0]  \cdot  Pr[K_{m+1}'=k_{m+1}'|\bm{k_m},\bm{n_{m+1}},H_0]  }$$
 $$=\sigma(k_m, p_a, p_0, n_m) \cdot \frac{Pr[K_{m+1}'=k_{m+1}'|\bm{k_m}, \bm{n_{m+1}}, H_a]}{Pr[K_{m+1}'=k_{m+1}'|\bm{k_m},\bm{n_{m+1}},H_0]}$$
 by the induction hypothesis.
Then this is simply equal to
 $$\sigma(k_m, p_a, p_0, n_m)\cdot\frac{\text{Bin}(k_{m+1}',n_{m+1}',p_a)}{\text{Bin}(k_{m+1}',n_{m+1}',p_0)}
 $$$$
 =\frac{p_a^{k_m} (1-p_a)^{n_m-k_m}}{p_0^{k_m} (1-p_0)^{n_m-k_m}} \cdot
 \frac{p_a^{k_{m+1}'} (1-p_a)^{n_{m+1}'-k_{m+1}'}}{p_0^{k_{m+1}'} (1-p_0)^{n_{m+1}'-k_{m+1}'}}
 $$
 $$
 =\sigma(k_{m+1}, p_a, p_0, n_{m+1})
 $$
\end{proof}

\begin{definition} 
\label{def:kmin}
Let $[n_1, \ldots, n_j]$ be the round schedule of an audit that has not stopped by the round $j-1$. Let us define 
\begin{small}
\begin{equation}\label{eq:kMin}
k^{p_a, p_0, \alpha, k_{j-1}}_{min, j, n_{j-1}, n_j}  =
  \min\left\{k : \omega_j(k, k_{j-1},p_a,p_0,n_j, n_{j-1}) \geq \frac{1}{\alpha}  \right\}.
\end{equation}
\end{small}
\end{definition}
As we have seen in Lemma \ref{lemma:minerva2_kmin_exists}, such a value of $k^{p_a, p_0, \alpha, k_{j-1}}_{min, j, n_{j-1}, n_j}$ exists and $k_j \geq k^{p_a, p_0, \alpha, k_{j-1}}_{min, j, n_{j-1}, n_j} $ if and only if the result of the audit is Correct, (\textit{i.e.,} the stopping condition in Definition~\ref{def:minervatwo} holds).

The following lemma shows a Markov-like property of \Providence audit (\textit{i.e.,}
for an audit that has not stopped in the first $j-1$ rounds, only cumulative results of the round $j-1$ matter: cumulative sample size $n_{j-1}$ and the number of ballots for the winner $k_{j-1}$).

\begin{lemma}\label{lemma:markov}
Let $[n_1, \ldots, n_{j-1}, n_j]$ be a round schedule for an execution of  \Providence audit that has not stopped
in any of its first $j-1$ rounds (\textit{i.e.,} for every $i = 1, \ldots, j-1$:
$k_i < k^{p_a, p_0, \alpha, k_{j-1}}_{min, j, n_{j-1}, n_j} $), then: 

\[ 
k^{p_a, p_0, \alpha, k_{j-1}}_{min, j, n_{j-1}, n_j} = k^{p_a, p_0, \alpha, k_{j-1}}_{min, 2, n_{j-1}, n_j} .
\]
\end{lemma}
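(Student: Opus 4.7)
The plan is to unfold both sides using Definition~\ref{def:kmin} and observe that, for $\Providence$, the function $\omega_j$ appearing in the stopping test depends on the round index $j$ only through its \emph{arguments} $k_j, k_{j-1}, n_j, n_{j-1}$. Concretely, for every $j \geq 2$, Definition~\ref{def:minervatwo} gives
\[
\omega_j(k,k_{j-1},p_a,p_0,n_j,n_{j-1}) = \sigma(k_{j-1},p_a,p_0,n_{j-1})\cdot \tau_1(k-k_{j-1},p_a,p_0,n_j-n_{j-1}),
\]
and the right-hand side contains no occurrence of $j$ itself. Thus $\omega_j$ and $\omega_2$ are literally the same function of these arguments.

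Given that identity, the proof writes itself: by Definition~\ref{def:kmin},
\[
k^{p_a,p_0,\alpha,k_{j-1}}_{min,j,n_{j-1},n_j} = \min\!\left\{k : \omega_j(k,k_{j-1},p_a,p_0,n_j,n_{j-1}) \geq \tfrac{1}{\alpha}\right\},
\]
and similarly for $k^{p_a,p_0,\alpha,k_{j-1}}_{min,2,n_{j-1},n_j}$ with $\omega_2$ in place of $\omega_j$. Since the two sets whose minima are taken are defined by the same inequality in the same variable $k$, their minima coincide. Lemma~\ref{lemma:minerva2_kmin_exists} guarantees that these minima exist (as $\omega_j$ is strictly increasing in $k$ by the argument used there), so both sides are well-defined and equal.

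The only conceptual content worth spelling out is why this \emph{is} the Markov-like statement advertised: the hypothesis that the audit did not stop in rounds $1,\ldots,j-1$ never enters the formula for $\omega_j$, and the full history $(k_1,\ldots,k_{j-2})$ and $(n_1,\ldots,n_{j-2})$ is absent from $\omega_j$ as well. So the stopping test in round $j$ is determined entirely by $(k_{j-1}, n_{j-1}, n_j)$, exactly as it would be determined by $(k_1, n_1, n_2)$ in a two-round audit with the same parameters. I do not anticipate any serious obstacle; the main step is just unpacking Definition~\ref{def:minervatwo} and noting the absence of $j$ from the right-hand side of the displayed formula for $\omega_j$. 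This lemma is then precisely what is needed to feed into the proof that \Providence is strategy-proof: the decision structure at round $j$ depends on prior rounds only through the pair $(k_{j-1}, n_{j-1})$, which is exactly the Markov condition invoked in the proof of the Strategy-Proof RLA property.
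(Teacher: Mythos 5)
Your proposal is correct and follows essentially the same route as the paper: the paper's own proof simply unfolds Definition~\ref{def:kmin} and observes that $\omega_j$ for $j\ge 2$ has no dependence on the round index $j$ beyond its arguments $(k, k_{j-1}, n_j, n_{j-1})$, so the two minima coincide. Your version is, if anything, more explicit than the paper's (which states the equality of the two $\min$ expressions without comment), and your appeal to Lemma~\ref{lemma:minerva2_kmin_exists} for well-definedness is a reasonable addition.
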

\begin{proof}
Let $k_{j-1}$ denote the number of ballots drawn for the declared winner up to the round $j-1$ (out of $n_{j-1}$ sampled ballots). The stopping decision for the round $j$ is made as follows:

\[
 k^{p_a, p_0, \alpha, k_{j-1}}_{min, j, n_{j-1}, n_j}  = \min\left\{k : \omega_{j}(k, k_{r-1}, p_a, p_0, n_r, n_{r-1}) \geq \frac{1}{\alpha}  \right\} = 
\]
\[
  =  k^{p_a, p_0, \alpha, k_{j-1}}_{min, 2, n_{j-1}, n_j}  
\]

\end{proof}

That is, the stopping condition is equivalent to that of a two round audit with the same cumulative votes for the winner and cumulative round sizes: the first round is of size $n_{j-1}$ and has $k_{j-1}$ votes for the winner, and the second (cumulative) round size is $n_j$ with $k_j$ (cumulative) votes for the winner. Compare this to the similar property for the $\Bravo$ stopping condition.

\end{document}